\newif\ifdraft
\newif\ifarxivversion
\newenvironment{romenumerate}{
\begin{enumerate}
\itemsep0pt \parskip0pt \parsep0pt%
 }{\end{enumerate}}
\newcommand{\R}{\mathbb{R}}
\newcommand{\Z}{\mathbb{Z}}
\newcommand{\eps}{\varepsilon}
\newcommand\obs{V_n^\infty}
\newcommand\tl{v} 
\newcommand\ms[1]{\todo[color=green!60!white]{\emph{M\v{S}: #1}}}
\newcommand\vb[1]{\todo[color=red!60!white]{\emph{VB: #1}}}
\newcommand\mo[1]{\todo[color=orange!60!white]{\emph{MO: #1}}}
\newcommand\pv[1]{\todo[color=blue!60!white]{\emph{PV: #1}}}
\newcommand\ms[1]{}
\newcommand\vb[1]{}
\newcommand\mo[1]{}
\newcommand\pv[1]{}
\newcommand\resp[1]{\emph{RESPONSE: #1}}
\newcommand{\lv}[1]{}
\newcommand{\appendixText}{}
\newcommand{\toappendix}[1]{\gappto{\appendixText}{{#1}}}
  \newcommand{\toappendix}[1]{}
\begin{document}

\title{Non-homotopic Loops with a Bounded Number of Pairwise Intersections\thanks{
M\v{S} was supported by the Czech Science Foundation, grant number \mbox{GJ20-27757Y}, with institutional support RVO:67985807.
VB acknowledges the support of the OP VVV MEYS funded project CZ.02.1.01/0.0/0.0/16\_019/0000765 ``Research Center for Informatics''.
This work was supported by the Grant Agency of the Czech Technical University in Prague, grant \mbox{No.~SGS20/208/OHK3/3T/18}.
PV and MO were supported by project 19-17314J of the Czech Science Foundation.
}}

\author{V\'{a}clav Bla\v{z}ej\inst{1}\orcidID{0000-0001-9165-6280} \and
Michal Opler\inst{2}\orcidID{0000-0002-4389-5807} \and
Matas~\v{S}ileikis\inst{3}\orcidID{0000-0002-6353-9105} \and
Pavel Valtr\inst{4}\orcidID{0000-0002-3102-4166}}

\institute{Faculty of Information Technology, Czech Technical University in Prague, Prague, Czech Republic\\
\and
Computer Science Institute, Charles University, Prague, Czech Republic\\
\email{opler@iuuk.mff.cuni.cz} \and
The Czech Academy of Sciences, Institute of Computer Science, Prague, Czech Republic\\
\email{matas.sileikis@gmail.com} \and
Dept. of Applied Mathematics, Faculty of Mathematics and Physics, Charles University, Prague, Czech Republic
}

\maketitle%

\color{black}

\begin{abstract}
Let $V_n$ be a set of $n$ points in the plane and let $x \notin V_n$.
    An \emph{$x$-loop} is a continuous closed curve not containing any point of $V_n$.
    We say that two $x$-loops are \emph{non-homotopic} if they cannot be transformed continuously into each other without passing through a point of $V_n$.
    For $n=2$, we give an upper bound $e^{O\left(\sqrt{k}\right)}$ on the maximum size of a family of pairwise non-homotopic $x$-loops such that every loop has fewer than $k$ self-intersections and any two loops have fewer than $k$ intersections.
    The exponent $O\big(\sqrt{k}\big)$ is asymptotically tight. The previous upper bound bound $2^{(2k)^4}$ was proved by Pach, Tardos, and T\'oth [\emph{Graph Drawing 2020}].
    We prove the above result by proving the asymptotic upper bound $e^{O\left(\sqrt{k}\right)}$ for a similar problem when $x \in V_n$, and by proving a close relation between the two problems.
    \keywords{graph drawing, non-homotopic loops, curve intersections, plane}
\end{abstract}

\section{Introduction}

The \emph{crossing lemma} bounds the number of edge crossings of a graph drawn in the plane where the graph has $n$ vertices and $m \ge 4n$ edges.
It was proved independently by Ajtai, Chv\'atal, Newborn, and Szemer\'edi~\cite{ACNS} and by Leighton~\cite{Leighton}.
Recently, Pach, Tardos, and T\'oth~\cite{PTT} proved a modification of the crossing lemma for multigraphs with non-homotopic edges.
In the proof, they used a bound on the maximum size of collections of so-called non-homotopic loops.
We focus on improving the bounds for two settings -- one used by the authors of \cite{PTT} and a slightly altered one. We provide an upper bound $e^{O\left(\sqrt{k}\right)}$ with the asymptotically tight exponent $O\big(\sqrt{k}\big)$ in both settings (Theorem~\ref{thm:n_two_selfint}); we also show a new relation between the extremal functions in the two settings (Proposition~\ref{prop:ineqs}).


For an integer $n \ge 1$, let $V_n = \{v_1, \dots, v_n\}$ be a set of $n$ distinct points in the plane $\R^2$. Given $x \in \R^2$, an \emph{$x$-loop} is a continuous function $\ell : [0,1] \to \R^2$ such that $\ell(0) = \ell(1) = x$ and $\ell(t) \not\in V_n$ for $t \in (0,1)$. 
Two $x$-loops $\ell_0, \ell_1$ are \emph{homotopic}, denoted $\ell_0 \sim \ell_1$, if there is a continuous function $H: [0,1]^2 \to \R^2$ (a \emph{homotopy}) such that 
\begin{gather*}
 H(0,t) = \ell_0(t) \quad \text{and} \quad H(1,t) = \ell_1(t) \quad\text{for all } t \in [0,1],\\
 H(s,0) = H(s,1) = x \quad \text{for all } s \in [0,1], \text{ and}\\
 H(s,t) \not\in V_n\quad \text{for all } s,t \in (0,1).
\end{gather*}

In the case when $x \in V_n$, we will, without loss of generality, assume $x = v := v_1$, and refer to $x$-loops as $v$-loops (dropping the subscript for simplicity). Henceforth, when we use the term \emph{$x$-loop}, we will tacitly assume that $x \notin V_n$. 
When $x$ (or $v$) is clear from the context we will also call an~$x$-loop ($v$-loop) simply a \emph{loop}.

A \emph{self-intersection} of a loop $\ell$ is an unordered pair $\{t, u\} \subset (0,1)$ of distinct numbers such that $\ell(t) = \ell(u)$, while an intersection of two loops $\ell_1, \ell_2$ is an \emph{ordered} pair $(t, u) \in (0,1)^2$ such that $\ell_1(t) = \ell_2(u)$.

Given integers $n, k \ge 1$ and $x \notin V_n$ ($v \in V_n$), let $f(n,k)$ (respectively, $g(n,k)$) be the largest number of pairwise non-homotopic $x$-loops (respectively, $v$-loops) such that every loop has fewer than $k$ self-intersections and any two loops have fewer than $k$ intersections.  

Pach, Tardos and T\'oth \cite{PTT} considered $x$-loops (they also added a convenient restriction that no loop passes through $x$, which holds trivially in the setting of $v$-loops).
The quantities $f(n,k)$ and $g(n,k)$ are related by the following inequalities.
  In \cite{BOSV_CALDAM} we proved that for every $n, k \ge 1$ we have
\begin{equation}
\label{eq:gfg}
g(n,k) \le f(n,k) \le g(n+1,k).
\end{equation}
In the current paper we give the following inequality, which allows us to improve the upper bound from \cite{PTT} on $f(n,k)$ by proving an upper bound on $g(n,k)$.
\begin{proposition}\label{prop:ineqs}
  For every $n, k \ge 1$ we have
\begin{equation}
\label{eq:f_less_g}
f(n,k) = O(k^2) \cdot g(n,5k).
\end{equation}
\end{proposition}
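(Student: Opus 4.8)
The plan is to convert each $x$-loop of a given family into a $v$-loop by conjugating it with a path from $x$ to $v$, in such a way that the number of (self-)intersections grows by at most essentially a factor of $5$, while the underlying map on homotopy classes is at most $O(k^2)$-to-one.

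The first ingredient is a bound for a single loop. If $\ell$ is an $x$-loop with fewer than $k$ self-intersections, then, viewing $\ell$ as a $4$-regular plane multigraph whose vertices are its self-intersection points, Euler's formula gives that $\ell$ has at most $k+1$ complementary faces. Hence there is a simple arc $\delta$ from $x$ to $v$, disjoint from $V_n\setminus\{v\}$ and in general position with $\ell$, that crosses $\ell$ at most $k$ times: take $\delta$ along a shortest path in the face-adjacency graph of $\ell$ from a face incident to $x$ to the face containing $v$ (note $v\notin\ell$). Forming the $v$-loop $\bar\delta\cdot\ell\cdot\delta$ (traverse $\delta$ from $v$ back to $x$, then go around $\ell$, then traverse $\delta$ from $x$ to $v$) and drawing the two occurrences of $\delta$ as two closely nested copies of the simple arc, the resulting $v$-loop has self-intersections coming only from $\ell$ itself, from the at most $k$ crossings of each copy of $\delta$ with $\ell$, and from $O(1)$ reconnections near $x$ and $v$; so fewer than $5k$ in total, once the nested copies are chosen carefully enough to absorb the lower-order terms.

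Now let $\ell_1,\dots,\ell_N$ be a family witnessing $N=f(n,k)$. The key point is that if all $\ell_i$ are conjugated by one and the \emph{same} arc $\delta$, then -- since conjugation by a fixed path is an isomorphism between the fundamental groups based at $x$ and at $v$ -- the resulting $v$-loops $\ell^{*}_1,\dots,\ell^{*}_N$ are pairwise non-homotopic; moreover the same bookkeeping as above, applied to a pair $\ell^{*}_i,\ell^{*}_j$, bounds their number of intersections by (intersections of $\ell_i$ with $\ell_j$) $+$ (twice the crossings of $\delta$ with $\ell_i$) $+$ (twice the crossings of $\delta$ with $\ell_j$) $+$ $O(1)$, which stays below $5k$ provided $\delta$ crosses each of $\ell_i,\ell_j$ fewer than $k$ times. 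Thus a single efficient arc $\delta$ turns its ``group'' of loops into a family of pairwise non-homotopic $v$-loops eligible for $g(n,5k)$. Since no single arc can be efficient for every loop of the family, I would partition $\{\ell_1,\dots,\ell_N\}$ into $O(k^2)$ groups, each admitting a common arc crossing every loop of the group fewer than $k$ times; applying the construction group by group then gives $f(n,k)=N\le O(k^2)\cdot g(n,5k)$.

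The step I expect to be the main obstacle is producing this $O(k^2)$-classification together with the common arcs. A priori a loop with few self-intersections can cross a fixed arc arbitrarily often (a long ``comb''), so the arcs must be adapted to the loops, yet only to a bounded extent. The natural approach is: after homotoping a candidate arc across the innermost bigons it forms with a given $\ell_i$ -- which changes neither the homotopy class of the arc nor $\ell_i$ -- only $O(k)$ essential crossings survive, and the homotopy type of the resulting efficient arc is governed by which of the $\le k+1$ faces of $\ell_i$ are incident to $x$ and contain $v$, producing the $(k+1)^2$ worth of types. Turning this picture into a genuine partition with a single common curve per group, and verifying that the cross-terms between the $\ell^{*}_i$ indeed remain under $5k$ (which is where the constant $5$ comes from), is the technical heart of the argument.
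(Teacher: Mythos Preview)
Your proposal has a genuine gap at the very point you flag as routine: the assertion that ``conjugation by a fixed path is an isomorphism between the fundamental groups based at $x$ and at $v$, so the resulting $v$-loops are pairwise non-homotopic.'' This is false here, because $v\in V_n$. A homotopy of $v$-loops only requires $H(s,t)\notin V_n$ for $s,t\in(0,1)$, while $H(s,0)=H(s,1)=v$; so a $v$-loop that merely winds around $v$ can be ``unwound'' by a homotopy that slides across $v$ at the endpoints. Concretely, if $g_1$ denotes a small loop around $v$, then the conjugates of two $x$-loops $\ell$ and $g_1^{\,i}\,\ell\,g_1^{\,j}$ are homotopic as $v$-loops for every $i,j\in\Z$. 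Thus you cannot conclude that non-homotopic $x$-loops yield non-homotopic $v$-loops, and your $O(k^2)$ partition is aimed at the wrong obstacle.

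In the paper, the $O(k^2)$ factor comes from exactly this non-injectivity, not from any difficulty in choosing arcs. A \emph{single} conjugating path is used for the entire family: one first routes $x$ to a point $x'$ on one of the loops, say $\ell_f$, along a path $R$ contained in $\ell_f$ (so $R$ crosses each $\ell_i$ fewer than $k$ times automatically, by the pairwise-intersection hypothesis), and then routes $x'$ to $v$ along a path $P$ lying in a complementary face of the whole drawing (so $P$ crosses no loop at all). This yields $v$-loops with fewer than $5k$ (self-)intersections for every pair, with no partitioning needed. The loss of injectivity is then controlled by a separate lemma: within a single $v$-homotopy class the corresponding $x'$-loops differ only by $(01)$-prefixes and $(10)$-suffixes, and a pigeonhole/winding argument (the ``snail'' lemma) shows that at most $O(k^2)$ such loops can coexist without forcing $5k$ (self-)intersections. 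That is the true source of the $O(k^2)$, and it is where your argument would need to be rebuilt.
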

Proposition~\ref{prop:ineqs} is proved (with a multiplicative constant of $484$) in Section~\ref{sec:prop:ineqs}.

Pach, Tardos and T\'oth \cite{PTT} showed that for $n \ge 2$
\begin{equation}\label{eq:upper}
  f(n,k) \le 2^{(2k)^{2n}}
\end{equation}
and
\begin{equation}
\label{eq:lower}
  f(n,k) \ge \begin{cases} 
    2^{\sqrt{nk}/3}, \ \ \ \ \ {\rm for}\ n \le 2k,  \\
    (n/k)^{k-1}, \ \ {\rm for}\ n \ge 2k.
  \end{cases}
\end{equation}
 Pach, Tardos and T\'oth \cite{PTT} also proved that if $n = 1$, then there are at most $2k + 1$ non-homotopic loops with fewer than $k$ self-intersections (that is, if we do not bound the number of intersections) implying $f(1,k) \le 2k + 1$.


In our main result we focus on the function $g$ in case $n=2$.

Inequalities \eqref{eq:gfg} and \eqref{eq:upper} imply that $g(2,k) \le 2^{16k^4}$.
After submitting this paper to GD2021, the authors became aware that the latter inequality was not the best at that time. From the proofs in the paper of Juvan, Malni\v{c} and Mohar~\cite{JMM} it follows that $f(2,k) \le k^{C k^2}$ for some absolute constant $C>0$ (this paper focuses on the generality of spaces in which the loops are drawn rather than on quantitative bounds; it also implies good bounds for $f(n,k)$ for any fixed $n$). 

In \cite{BOSV_CALDAM} the authors proved
\[
g(n,k) = 2^{O(k)}.
\]
The following theorem (which we prove in Section~\ref{sec:proof_n_two_selfint}) improves the upper bounds on $g(2,k)$ and $f(2,k)$ significantly. Interestingly, the bound on $g(2,k)$ only uses the restriction on self-intersections (this is not enough for $n \ge 3$), while the restriction on intersections is used only in Proposition~\ref{prop:ineqs}.

\begin{theorem}
  \label{thm:n_two_selfint}
  Let $n = 2$. For any $k$, the size of any collection of non-homotopic $v$-loops with fewer than $k$ self-intersections is $e^{ O(\sqrt{k}) }.$
  In particular $g(2,k) = e^{O(\sqrt{k})}$, and, in view of \eqref{eq:f_less_g}, we have $f(2,k) = e^{O(\sqrt{k})}$.
\end{theorem}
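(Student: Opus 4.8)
The plan is to work with $v$-loops where $V_2 = \{v, w\}$, so that every loop is based at $v$ and may wind around the single puncture $w$. The fundamental group of $\R^2 \setminus \{w\}$ based at $v$ is $\Z$, so a priori homotopy classes are indexed by winding number; the subtlety is that we are inside $\R^2 \setminus \{v,w\}$ only in the interior, but loops are allowed to pass through $v$ (they start and end there), so the relevant classifying data is richer. The first step is to fix, for each loop $\ell$ in our collection, a generic representative (finitely many transverse self-intersections, at most $k-1$ of them) and to encode its homotopy class by a combinatorial word: cut the plane along a ray from $w$ to infinity avoiding $v$, and read off the sequence of signed crossings of this ray, together with the visits to $v$. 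This gives each loop a reduced word, and non-homotopic loops give distinct reduced words. The core claim will then be: a loop with fewer than $k$ self-intersections has a reduced word of length $O(\sqrt{k})$, from which the count $e^{O(\sqrt{k})}$ follows since the number of reduced words of length $L$ over a bounded alphabet is $e^{O(L)}$.

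The heart of the argument is the bound on word length in terms of self-intersections, and this is where I expect the main obstacle. The intuition is a standard one in the theory of curves on surfaces: a curve whose reduced combinatorial length is $L$ must, when drawn in minimal position, create on the order of $L^2$ self-intersections, because each pair of ``incompatible'' syllables of the word is forced to contribute a crossing. Concretely, if the reduced word has a large subword of the form $a^{i_1} b a^{i_2} b \cdots$ with the exponents $i_j$ not monotone, or more generally if the word cannot be realized by a ``nested'' family of arcs around $w$, then one gets a quadratic lower bound on crossings by a counting/parity argument on how arcs separate $v$ from $w$. The work is to make precise which words are ``cheap'' (realizable with few self-intersections) and to show that any such word has length $O(\sqrt{k})$ — equivalently, that a word of length $\gg \sqrt{k}$ forces $\ge k$ self-intersections in every representative. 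I would set this up by assigning to the loop its sequence of winding numbers relative to $w$ as recorded between consecutive returns to $v$, and showing that consecutive entries must differ, that the number of distinct values is $O(\sqrt{k})$, and that the length is controlled by the number of ``reversals,'' each reversal being paid for by a growing number of crossings (a triangular-number bound $1 + 2 + \cdots$ reaching $k$).

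After the length bound is in hand, the remaining steps are routine: (i) the number of loops equals the number of distinct reduced words, which is at most $\sum_{L = O(\sqrt k)} (\text{const})^L = e^{O(\sqrt{k})}$; (ii) feed this into Proposition~\ref{prop:ineqs} to get $f(2,k) = O(k^2) \cdot g(2,5k) = O(k^2) \cdot e^{O(\sqrt{5k})} = e^{O(\sqrt{k})}$; (iii) note matching lower bound already follows from \eqref{eq:lower} with $n=2$, which gives $f(2,k) \ge 2^{\sqrt{2k}/3}$, so the exponent $O(\sqrt k)$ is tight. I would also remark, as the theorem statement anticipates, that only the self-intersection bound is used for $g(2,k)$ — the intersection bound between distinct loops plays no role here and enters solely through Proposition~\ref{prop:ineqs}. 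The one genuinely delicate point to get right is the reduction to a \emph{generic} representative and the proof that minimizing self-intersections over a homotopy class cannot go below the combinatorial quantity we use; I would handle this via a standard innermost-bigon / Reidemeister-move argument, removing bigons one at a time without increasing crossing count, until the representative is in a normal form on which the quadratic lower bound is transparent.
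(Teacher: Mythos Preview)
Your core claim---that a $v$-loop with fewer than $k$ self-intersections has a reduced word of length $O(\sqrt{k})$---is false, and this is where the proposal breaks down. Consider, in the paper's encoding, the $v$-loop whose inner word is $(21)^s2$ for $s\approx k$: this is a single $s$-winding around $v_2$, and it can be drawn as a spiral that winds $s$ times around $v_2$ and then cuts straight back to $v$, incurring only about $s$ self-intersections (Lemma~\ref{lem:windings_orient}\ref{en:windings_orient:self} gives the matching lower bound). Taking $s=k-1$ yields a reduced word of length $2k-1$ on a loop with $k-1$ self-intersections. More generally, words built from a bounded number of long windings have length linear in $k$ while keeping self-intersections below $k$. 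Your triangular-number heuristic ($1+2+\cdots$) bounds not the word \emph{length} but the number of \emph{maximal alternating blocks} (windings), and it is only the latter that is $O(\sqrt{k})$.

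The paper's proof exploits exactly this distinction. It shows that every admissible word arises from an $\alpha\beta\alpha\beta$-free ``core'' word $w'$ of length $O(\sqrt{k})$ by re-inflating each maximal two-letter block (Section~\ref{ss:expansions}). There are $e^{O(\sqrt{k})}$ choices for the core, but the substantive step is the expansion count of Lemma~\ref{lem:expansions}: one must show that the number of inflation vectors $(s_1,\dots,s_\ell)$ subject to
\[
\sum_i s_i + 2\sum_{i<j}\min(s_i,s_j) < k
\]
is itself $e^{O(\sqrt{k})}$. This is a genuine integer-point counting argument under a nonlinear constraint, and no word-length bound can replace it. Your sketch has the right intuition for bounding the number of blocks, but misses that individual blocks can be long and that counting the resulting words needs this separate lemma.

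A smaller issue: your encoding via a single ray from $w=v_2$ to infinity records, after reduction, only the winding number around $v_2$; there are no interior ``returns to $v$'' to record, since by definition $\ell(t)\notin V_n$ for $t\in(0,1)$. This does not separate homotopy classes of $v$-loops. The paper's equator through $v_0,v_1,v_2$ gives a three-letter alphabet, and Lemma~\ref{lem:nonhomotopicwords} is what certifies that distinct reduced words correspond to distinct homotopy classes.
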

Note that in view of \eqref{eq:lower} the exponent $O\big(\sqrt{k}\big)$ is asymptotically tight.

There is still a huge gap between lower and upper bounds on $f(n,k)$ (and $g(n,k)$) for general $n$; see~\cite{PTT} (also the implicit bounds from \cite{JMM} are probably better for many pairs $(n,k)$).
In the proof of Theorem~\ref{thm:n_two_selfint}, we use several lemmas (Lemmas~\ref{lem:reduced_words}--\ref{lem:expansions}), which might help to narrow this gap, as they provide useful tools and are usually stated for general $n$.

 \subsection*{Acknowledgement}
 We thank the referees for a careful reading and numerous corrections, in particular, one of the referees for pointing out the upper bound by Juvan, Malni\v{c} and Mohar.
\section{Setup and Notation}
\label{sec:setup}
\subsection{Obstacles, equator and gaps}
Depending on the context, we will treat $S := \R^2 \setminus V_n$ either as the plane with $n$ points removed, or as a sphere with $n + 1$ points removed (where $n$ of these points come from the set $V_n = \{ v_1, \dots, v_n\}$ and the last point, denoted by $v_0$, corresponds to the ``point at infinity''). We define $\obs = \{v_0\} \cup V_n$ and refer to the elements of $\obs$ as \emph{obstacles}.

  Given a finite collection of loops, by infinitesimal perturbations, without creating any new intersections or self-intersections, we can ensure that 
  \begin{enumerate}
      \item no two (self-)intersections occur at the same point of $S$,
      \item \label{enum:cross} every (self-)intersection is a \emph{crossing}, that is, one loop ``passes to the other side'' of the other loop (rather than two loops `touching'). 
  \end{enumerate}

      Given a drawing of the loops satisfying the above conditions, we choose a closed simple curve on the sphere which goes through the obstacles $v_0, \dots, v_n$ in this order (for $x$-loops, we choose this curve so that it also avoids $x$). We call this loop the \emph{equator}. 
  Removing the equator from the sphere, we obtain two connected sets, which we arbitrarily name the \emph{northern hemisphere} and the \emph{southern hemisphere}.
  We refer to the $n + 1$ open curves into which the equator is split by excluding points $v_i$ as \emph{gaps}. We assign label $i$ to the gap between $v_i$ and $v_{i+1}$, with indices counted modulo $n+1$. Moreover, when talking about $v$-loops, we treat $v = v_1$ as an additional, special gap, with a label $v$; see Figure~\ref{fig:notation_example}, where the equator is drawn as a circle, for simplicity. 
  
  \begin{figure}[h]
    \centering
    \includegraphics{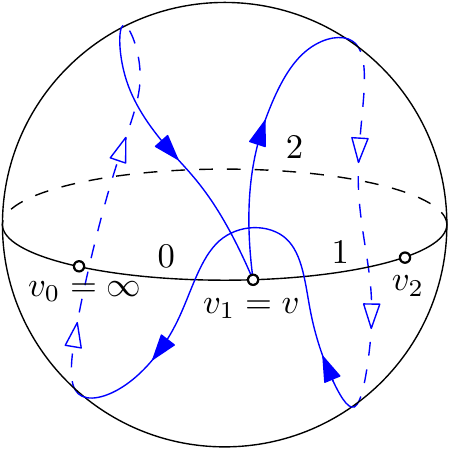}
    \caption{Equator, the gaps, and $v$-loop that induced the word $w=v2102v$.}%
    \label{fig:notation_example}
  \end{figure}
  
  By a careful choice of the equator, we can assume the following conditions: 
  \begin{enumerate}[resume]
      \item 
  every loop in the collection intersects the equator a finite number of times,
  \item 
\label{enum:eq_cross}
each of these intersections (except for, possibly, the intersection at $v$) is a crossing, (i.e., no loop touches the equator),
  \item no point of self-intersection or intersection lies on the equator.
  \end{enumerate}
  \subsection{Segments and induced words}
Part of a given loop $\ell$ between a pair of distinct intersections with the equator (inclusively) is called a \emph{segment}. Treating a loop (respectively, a segment) as a function $\ell: [0,1] \to \infty$ (respectively, the restriction of $\ell$ to a closed subinterval of $[0,1]$), gives a natural orientation of a loop or a segment.
  A minimal segment is called an \emph{arc}.
  If an arc intersects itself, we can remove the part of the arc between these self-intersections without changing the homotopy class of the loop, which allows us to make yet another assumption that 
  \begin{enumerate}[resume]
      \item \label{enum:arc} there are no self-intersections within any arc.
  \end{enumerate}
 

  Consider a segment $s$ that intersects the equator $t$ times (including the beginning and the end). 
  By listing the labels of gaps that the loop crosses as it traverses $s$, we obtain a word $w = w_1\dots w_t$. 
  In this case we say that $s$ is a $w$-\emph{segment}.
  If we take the maximal segment of a loop $\ell$, that is, from the first to the last crossing of equator (which is the whole loop in the case of $v$-loops), then we say that word $w$ is the word \emph{induced} by $\ell$; see Figure~\ref{fig:notation_example}. Given a loop that induces a word $w$, the segments of the loop correspond to subwords (that is, words consisting of consecutive letters of $w$) of length at least $2$.

  Note that the word induced by a $v$-loop $\ell$ starts and ends in $\tl$. Dropping these $\tl$s we obtain a word which we call the \emph{inner word induced by} a $v$-loop $\ell$.

  If we reverse the orientation of a segment $s$, the order of gaps is reversed and hence we obtain the \emph{reverse} of the word $w$, denoted $\overline w := w_t \dots w_1$.	Sometimes we talk about segments as unoriented objects, treating a segment simultaneously as a $w$-segment and a $\overline w$-segment.

  Given an oriented segment $s$, we define the \emph{polarity} of $s$ as the hemisphere to which the first arc of $s$ belongs.
  We call an oriented $w$-segment a \emph{$w$-downsegment} or a \emph{$w$-upsegment} whenever we want to specify the polarity of the first arc. 
  \begin{remark}
    \label{rem:reversing_segments}
    For a word $w$ of even length (that is, with an even number of letters) a $w$-segment is also a $\overline w$-segment of the same polarity, while for a word $w$ of odd length a $w$-segment is also a $\overline w$-segment of the opposite polarity.
  \end{remark}


  For example, consider a $v$-loop with the first arc in the southern hemisphere that induces the word $01201$. It has $01$-segments of both polarities (and hence $10$-segments of both polarities), as well as a $012$-downsegment (which is also a $210$-upsegment) but, say, there is no $012$-upsegment.

  
  \subsection{Patterns}
  To simplify notation we use a concept of \emph{pattern}, that is a finite sequence of symbols, usually, the first letters of the Greek alphabet $\alpha, \beta, \dots$. Given a pattern $T$, we say that a word $w$ is a $T$-word if it is obtained by replacing symbols by letters, so that two letters in $w$ are equal if and only if the symbols in $T$ are equal. For example, words $010, 202, 121$ match the pattern $\alpha\beta\alpha$, but $111$ does not. Given a pattern $T$, we say a word is $T$-free, if it contains no subword matching the pattern $T$.
  With a slight abuse of notation, by a $T$-segment we also mean a $w$-segment such that word $w$ matches the pattern $T$.

\section{General $n$}
In this section we state and prove several facts that are valid for general $n$, including all prerequisites for the proof of Theorem~\ref{thm:n_two_selfint}.

  \subsection{Simplifying words}
  Part \ref{enum:x_is_obst} of the following lemma allows to simplify the words induced by a given family of $v$-loops, in particular simplifying the setting for the proof of Theorem~\ref{thm:n_two_selfint}. Part~\ref{enum:x_not_obst} is used in the proof of the inequality \eqref{eq:f_less_g} of Proposition~\ref{prop:ineqs}.


  \begin{lemma}\label{lem:reduced_words} 
    Assume that $n \ge 1$.
    \begin{romenumerate}
    \item \label{enum:x_is_obst}
      Given any family of $v$-loops, each loop can be replaced by a homotopic $v$-loop inducing an $\alpha\alpha$-free inner word, with the first and last letters in $\{2, \dots, n\}$, so that the numbers of pairwise intersections or self-intersections do not increase.
    \item \label{enum:x_not_obst}
      Suppose that a family of $x$-loops and the equator are such that there is a path connecting $x$ to some $v \in V_n$ which does not intersect any loop or the equator. Then each $x$-loop can be replaced by a homotopic $x$-loop inducing an $\alpha\alpha$-free word so that the numbers of pairwise intersections or self-intersections do not increase.
    \end{romenumerate}
\end{lemma}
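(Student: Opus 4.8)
The plan is to achieve the two goals — eliminating $\alpha\alpha$ factors (a gap crossed twice in a row) and controlling the endpoints — by a sequence of local homotopies, each of which does not increase any of the relevant intersection counts. I would handle part~\ref{enum:x_is_obst} first and then explain why essentially the same argument, with a small twist using the extra path, yields part~\ref{enum:x_not_obst}.

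\textbf{Step 1: removing consecutive repeated letters.} Suppose the word induced by some loop in the family contains a factor $ii$, i.e.\ the loop crosses gap $i$ twice in immediate succession; between these two crossings the loop stays entirely in one hemisphere (say the northern one) and, by assumption~\ref{enum:arc}, that arc has no self-intersections. The two crossing points lie on the same gap $i$ of the equator, which is an arc of the equator with no obstacle in its interior. Hence the closed region bounded by this piece of the equator together with the arc of the loop is a disk containing no obstacle, and we can homotope the arc across gap $i$ to the southern side, "unzipping" the two crossings and removing the factor $ii$ from the induced word. Crucially this homotopy only sweeps a region free of obstacles, so the homotopy class is preserved; and one checks that each crossing or self-crossing of this arc with another arc $a$ is either eliminated or replaced by exactly one crossing of the pushed-across copy with $a$ on the other side, so no count increases. (After perturbing to restore conditions \ref{enum:cross}, \ref{enum:eq_cross}, and general position, the bookkeeping is the standard one for a Reidemeister-I/II-type move.) Performing this for every loop and iterating — the total number of equator crossings strictly decreases each time — terminates in a family where every induced word is $\alpha\alpha$-free.

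\textbf{Step 2: fixing the endpoints for $v$-loops.} The inner word of a $v$-loop is obtained by deleting the leading and trailing $\tl$ from the induced word; we must arrange that the first and last letters of this inner word lie in $\{2,\dots,n\}$, i.e.\ the first and last \emph{equator} crossings are neither at gap $0$ nor at gap $1$ (recall $v=v_1$, and gaps $0$ and $1$ are the two gaps adjacent to $v$ on the equator). If the first arc leaves $v$ and immediately crosses gap $0$ or gap $1$, that arc together with a piece of the equator between $v$ and that crossing bounds an obstacle-free disk (since gaps $0$ and $1$ are the sides of $v$ and contain no obstacle in their interiors), so we can slide the $v$-endpoint of the loop along the equator past that crossing, deleting it; by the same local analysis as in Step~1 this does not increase any count, and it shortens the word. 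Symmetrically for the last arc. Iterating removes all leading/trailing occurrences of the letters $0$ and $1$; it cannot delete the whole inner word, because a loop whose inner word becomes empty would be null-homotopic, contradicting the non-homotopic hypothesis (or, if singletons are allowed, one more push makes it trivially a point — but then it is homotopic to the constant loop and we may discard it). Re-running Step~1 if Step~2 created any new $\alpha\alpha$ (it cannot, since deleting a boundary letter from an $\alpha\alpha$-free word keeps it $\alpha\alpha$-free) completes part~\ref{enum:x_is_obst}.

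\textbf{Step 3: the $x$-loop case.} Here $x\notin V_n$, so there is no distinguished letter $\tl$ and no endpoint constraint to enforce — we only need $\alpha\alpha$-freeness. The hypothesis gives a path $\pi$ from $x$ to some obstacle $v\in V_n$ that meets neither the equator nor any loop. The point of $\pi$ is that $x$ behaves, for homotopy purposes, exactly like the obstacle $v$: we can first homotope every loop off a small neighbourhood of $\pi$ (possible since no loop meets $\pi$), effectively contracting $\pi$ and identifying $x$ with $v$, which turns each $x$-loop into a "$v$-loop" in the previous sense. Now Step~1 applies verbatim and yields $\alpha\alpha$-free induced words without increasing any intersection count; we then undo the contraction of $\pi$. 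Since the $x$-loop's induced word is just its (whole) word — there is nothing to trim — this finishes part~\ref{enum:x_not_obst}.

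\textbf{Main obstacle.} The conceptual content is light, but the delicate point is the intersection-count bookkeeping in Step~1: when we push an arc across a gap, we must argue that for \emph{every} other arc $a$ in the family, the number of transversal intersections with $a$ does not go up, and that no new self-intersections of the pushed loop with itself are created — including the subtle case where $a$ belongs to the same loop, or where $a$ also crosses the same gap $i$. The clean way to do this is to note that the swept region is an embedded disk $D$ bounded by (a subarc of) the equator and (a subarc of) the loop, with no obstacle inside; any other arc entering $D$ through the equator side of $\partial D$ must leave $D$, and since it cannot cross the equator outside gap $i$ without our having chosen to account for it, a parity/Jordan-curve argument shows each such arc contributes at least as many crossings before the move as after. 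Making this precise — and checking it survives the general-position perturbations that restore conditions \ref{enum:cross}--\ref{enum:arc} — is the part that needs care.
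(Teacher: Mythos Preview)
Your outline has the right two-phase structure (kill $\alpha\alpha$ factors, then trim leading/trailing $0$'s and $1$'s), and you have correctly identified that the only real content is the intersection bookkeeping when an ear is pushed across its gap. But the argument you sketch for that bookkeeping does not go through as stated, and the missing ingredient is exactly what makes the paper's proof work: one must always remove a \emph{minimal} (innermost) ear.

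Here is the failure mode. With your disk $D$ bounded by the ear and the $yz$-segment of gap~$i$, another loop $\ell'$ can enter $D$ through the equator side at a point $p$, stay in $D$, and exit again through the equator side at a point $p'$ --- in other words, $\ell'$ can have an ear nested inside the one you chose. After you push your ear to the far hemisphere, the new route hugs the $yz$-segment on the far side, so $\ell'$ picks up a new crossing near $p$ and another near $p'$, while having lost no crossings with the old ear. Your ``parity/Jordan-curve argument'' only gives that the total number of $\partial D$-crossings of $\ell'$ is even; it does not give that the number of equator-side crossings is at most the number of ear-side crossings, which is what you need. The paper's fix is to always choose an ear with no other ear nested between its endpoints; then a loop that enters $P$ through the $yz$-segment cannot exit through it (that would be a nested ear), so its first exit must be through the old ear, and this gives an injection from new crossings to removed ones. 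The same minimality device is needed in your Step~2: the paper removes a minimal ``end-ear'', and uses both that minimality and the fact that \emph{all ordinary ears are already gone} to rule out the two bad ways of exiting the pocket near $v$.

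Your Step~3 is also more complicated than necessary. You do not need to contract the path $\pi$ and change basepoints (and note that you cannot literally ``homotope every loop off a neighbourhood of $\pi$'' since every loop is pinned at $x\in\pi$). The paper simply reruns the ear-removal of Step~1 verbatim; the only place the basepoint enters is the claim ``the basepoint is not in $\overline{P}$'', and the role of the hypothesised path is precisely to certify that $x\notin\overline{P}$ (and that this property survives each ear removal).
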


The part \ref{enum:x_is_obst} was already proved in \cite{BOSV_CALDAM} and the part \ref{enum:x_not_obst} is easily established using the same ideas.
\ifarxivversion
Therefore, we defer the proof to Appendix~\ref{apx:reduced_words}.
\else
The details of the proof can be found in the full version of this paper \cite{fullversion}.
\fi

\toappendix{
\subsection{Lemma~\ref{lem:reduced_words}}\label{apx:reduced_words}
\begin{proof}[Lemma~\ref{lem:reduced_words} \ref{enum:x_is_obst}]
  By an \emph{ear} we mean a segment inducing an $\alpha\alpha$-word.
  Denote $v := v_1$. Taking into account that the gaps $0$ and $1$ are incident to $v$, by \emph{end-ear} we mean a $va$-segment or $av$-segment, where $a \in \{0,1\}$, that is, an end-ear is an initial or final arc of a loop that has one end in the gap $0$ or $1$.
  We will remove ears in the first step (thus deleting the consecutive pairs of equal letters) and end-ears in the second step (thus deleting the $0$\,s and $1$s at the ends of the inner word).

  For the first step, we choose an ear in some loop (between two points of some gap~$a$) and denote its endpoints by $y$ and $z$.
  By \emph{$yz$-gap} denote the set of points in the gap $a$ strictly between $y$ and $z$. An ear is minimal if there is no other ear with both endpoints in the $yz$-gap.
  We remove ears one by one, always picking a minimal ear.
 
  The chosen ear partitions one of the halves of the sphere into two simply connected open sets, one of which, that we denote by $P$, contains the $yz$-gap in its boundary. Let $\overline P$ denote the closure of $P$, that is $P$ together with its boundary.
  
  We remove the chosen ear by continuously transforming it to a path which closely follows the $yz$-gap inside the other hemisphere, as shown on Figure~\ref{fig:ear_removal}. By choosing the new path sufficiently close to the equator we can make sure that if a new \mbox{(self-)intersection} with some loop $\ell$ appears, then by tracing $\ell$ from that \mbox{(self-)intersection} in a certain direction we cross the $yz$-gap, thus entering the set $P$.
  \begin{figure}[h]
    \centering
    \includegraphics[width=.75\textwidth]{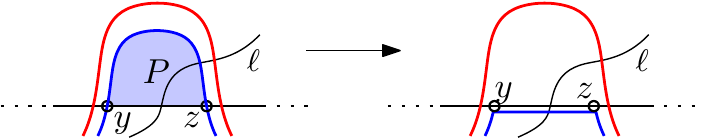}
    \caption{Removal of a minimal ear}%
    \label{fig:ear_removal}
  \end{figure}
  
  Since $v \notin \overline P$, by tracing $\ell$ further we must leave $P$. This cannot happen by crossing the $yz$-gap again, since that would contradict the fact that we picked a minimal ear. Hence we leave $P$ by crossing the original path of ear. This gives a way to assign, for each newly created intersection with $\ell$, a unique intersection with $\ell$ that was removed, showing that the transformation of the ear does not increase the total number of intersections with $\ell$. In particular the number of self-intersections does not increase since we can choose $\ell$ to be the loop containing the ear in question.

  The second step, removing end-ears, is similar to the first one, except that we have to deal with the endpoint $v$ separately. 
  Let $y$ be the point where an end-ear crosses a gap $a$ incident to $v$ (either $0$ or $1$). Similarly as for ears, by $vy$-gap we mean the points of gap $a$ strictly between $v$ and $y$. An end-ear is minimal, if no other end-ear crosses gap $a$ through the $vy$-gap.  We will remove the end-ears one by one, always picking a minimal end-ear.
  
  Since the end-ear is contained in one of the halves of the sphere, it partitions it into two simply connected sets, one of which, that we denote by $P$, has the $vy$-gap in its boundary.
  We remove the end-ear by continuously transforming it into a path that closely follows the $vy$-gap in the opposite hemisphere, as shown on Fig.~\ref{fig:x-ear_removal}. By choosing the new path sufficiently close to the equator, we can make sure that if a new \mbox{(self-)}intersection with some loop $\ell$ appears, then by tracing $\ell$ from that \mbox{(self-)}intersection in a certain direction we cross the $vy$-gap, thus entering set $P$.
  \begin{figure}[h]
    \centering
    \includegraphics[width=.85\textwidth]{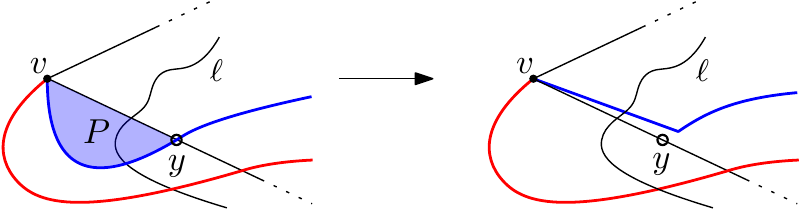}
    \caption{Removal of a minimal end-ear.}%
    \label{fig:x-ear_removal}
  \end{figure}
  Tracing $\ell$ further we must eventually leave the set $P$, since $v \notin P$.
  This cannot happen by crossing the $vy$-gap again, since that would contradict the fact that we removed all ears in the first step. It also cannot happen by crossing~$v$, since this would contradict that we chose a minimal end-ear. Hence we leave $P$ by crossing the original path of the end-ear, which determines an intersection with the loop $\ell$ that was removed by transforming the end-ear. 
  
  Similarly as in the first step this assigns a unique removed intersection with~$\ell$ to each new intersection with $\ell$, showing that removal of a minimal end-ear does not increase the number of (self-)intersections.

  We recap what we have proved: by repeatedly removing minimal ears in the first step and removing minimal end-ears in the second step we end up with a drawing which does not have any ears nor end-ears, proving the lemma.
\end{proof}

\begin{proof}[Lemma~\ref{lem:reduced_words} \ref{enum:x_not_obst}]
  To prove \ref{enum:x_not_obst}, we need to remove only ears, and we do it precisely the same way as in the proof of Lemma~\ref{lem:reduced_words} \ref{enum:x_is_obst}, see Figure~\ref{fig:ear_removal}. We need to check two things: (a) $x$ does not belong to $\overline P$ (thus making every loop $\ell$ that crosses the new route exit $P$ through the old route); (b) removal of a single ear does not violate the property that there is a path from $x$ to some obstacle $u \in V_n$ that does not intersect any loop or the equator. For (a), note that, $x$ is not contained in the boundary of $P$ because this boundary consists of a segment of the equator and a segment of some loop that is not incident to $x$; neither is $x$ contained in $P$, since otherwise the path which connects $x$ to some obstacle $u$ (and does not intersect any loop or the equator) would be contained in $P$ by connectedness of $P$, implying that $u$ belongs to $\overline P$, which is impossible. For (b), suppose for contradiction, that $\pi$ is the $x$-$u$ path and no matter how close to the $yz$-gap we choose the new route, it intersects $\pi$. This would imply (due to compactness) that some point in $\pi$ belongs to the closure of the $yz$-gap. Since this closure contains neither $u$ nor $x$, it implies that some internal point of $\pi$ belongs to it, thus contradicting that $\pi$ does not intersect the equator. This shows that (b) holds.
\end{proof}
}

\subsection{Characterization of homotopic loops}
Once we can simplify loops using Lemma~\ref{lem:reduced_words}, we can use the following lemma to describe the homotopy classes of loops in terms of induced words. 

Recall that $v$-loops start and end at $v_1$, which is incident to gaps $0$ and $1$.
\begin{lemma}\label{lem:nonhomotopicwords}

  \begin{romenumerate}
    \item\label{en:nonhomotopicwords:xloop}
      Two $x$-loops inducing $\alpha\alpha$-free words $w_1$ and $w_2$ are homotopic if and only if $w_1 = w_2$.
    \item\label{en:nonhomotopicwords:samehemi}
      Suppose that two $v$-loops $\ell_1$ and $\ell_2$ are such that all four initial/final arcs lie in the same hemisphere. Suppose $\ell_1, \ell_2$ induce $\alpha\alpha$-free inner words
      \begin{equation}\label{eq:twowords}
	w_1 = x_1u_1z_1, \quad \text{and} \quad w_2 = x_2u_2z_2,
      \end{equation}
      where (possibly zero-length) words $x_1, x_2, z_1, z_2$ use only letters $0$ and $1$ and words $u_1, u_2$ start and end in letters other than $0$ or $1$ (if $u_i$ is empty, we assume that $x_i = w_i$ and $z_i$ is empty).  
      Then $\ell_1$ and $\ell_2$ are homotopic if and only if $u_1 = u_2$ and the lengths of $x_1$ and $x_2$ have the same parity.
    \item\label{en:nonhomotopicwords:noprefix}
      If two $v$-loops $\ell_1$ and $\ell_2$ induce words $w_1$ and $w_2$ that start and end in a letter other than $0$ or $1$, then $\ell_1$ and $\ell_2$ are homotopic if and only if $\ell_1$ and $\ell_2$ start in the same hemisphere, end in the same hemisphere, and $w_1 = w_2$.
  \end{romenumerate}
\end{lemma}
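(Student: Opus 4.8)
The plan is to prove Lemma~\ref{lem:nonhomotopicwords} by reducing everything to part~\ref{en:nonhomotopicwords:xloop}, which is the topological heart of the statement, and then obtaining the two $v$-loop parts by understanding exactly how the choice of initial/final gaps (around the special vertex $v = v_1$) interacts with the word. For part~\ref{en:nonhomotopicwords:xloop}, I would recall the standard fact that $S = \R^2 \setminus V_n$, viewed as a sphere with $n+1$ punctures, has free fundamental group $\pi_1(S, x)$ of rank $n$, and that a natural free generating set is given by small loops $\gamma_i$ around the obstacles, read off by how a curve crosses the equator. The crossing sequence of a loop $\ell$ with the equator, recorded as the induced word $w$, is precisely a word in these generators (and their inverses, the orientation of crossing being fixed once we fix hemispheres); homotopy class is then the element of the free group represented by this word. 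The key point is that an $\alpha\alpha$ subword corresponds exactly to a cancelling pair $\gamma_i \gamma_i^{-1}$ in the free group (crossing the same gap twice in a row, necessarily in opposite directions after perturbation), so $\alpha\alpha$-freeness is exactly the statement that $w$ is a reduced word. Since reduced words in a free group are unique representatives of their conjugacy/equivalence class — here genuinely unique because loops are based at $x$ — two $\alpha\alpha$-free induced words are homotopic iff they are literally equal. I would need to check the converse direction carefully: that every homotopy can be realized by a sequence of moves (finger moves / Reidemeister-type moves across gaps) that only change the word by inserting/deleting $\alpha\alpha$ pairs, which is where the ``$x \notin V_n$'' assumption and the finiteness/genericity conditions from Section~\ref{sec:setup} are used.

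For part~\ref{en:nonhomotopicwords:noprefix}, the $v$-loop is based at the obstacle $v_1$ itself, which lies on the equator between gaps $0$ and $1$; so the situation differs from part~\ref{en:nonhomotopicwords:xloop} only in that the basepoint is on the boundary curve and that we have the extra data of which hemisphere the loop departs into and which it arrives from. When $w_1, w_2$ start and end in a letter other than $0$ or $1$, the first and last arcs are ``genuine'' crossings away from $v$, so there is no ambiguity of the kind exploited in part~\ref{en:nonhomotopicwords:samehemi}; I would push $v_1$ slightly off the equator into a chosen hemisphere, turning the $v$-loop into an honest $x$-loop with $x$ near $v_1$, apply part~\ref{en:nonhomotopicwords:xloop}, and observe that the freedom in this pushing-off is exactly recorded by the departure/arrival hemispheres. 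Thus homotopy forces the hemispheres to agree and the words to agree, and conversely matching data gives a homotopy. Part~\ref{en:nonhomotopicwords:samehemi} is the subtle case: here the prefixes $x_i$ and suffixes $z_i$ use only letters $0,1$, i.e.\ the loop ``wiggles'' through the two gaps adjacent to $v$ before committing to a real excursion $u_i$. The point is that such a wiggle near $v$ is homotopically trivial up to its parity: sliding the basepoint $v_1$ back and forth across the equator absorbs one crossing of gap $0$ or gap $1$ at a time, so a prefix in $\{0,1\}^*$ of length $p$ is homotopic (rel the rest of the loop) to the empty prefix if $p$ is even and to a single such crossing if $p$ is odd — and which single crossing does not matter because $0$ and $1$ are the two ends of the same ``corridor'' around $v$. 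Hence the only invariant extracted from $x_i$ is the parity of $|x_i|$ (and this parity also pins down the parity of $|z_i|$ given the hemisphere hypothesis), while the middle word $u_i$ is rigid by part~\ref{en:nonhomotopicwords:xloop} applied after clearing the wiggles. I would make ``sliding the basepoint across the equator'' precise as an explicit homotopy supported in a small neighbourhood of $v_1$, and check it changes the induced word exactly by deleting a leading/trailing $0$ or $1$ (equivalently toggling it), leaving $u_i$ untouched.

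The main obstacle I expect is the converse (``only if'') direction in part~\ref{en:nonhomotopicwords:xloop}: turning an arbitrary homotopy $H$ into a combinatorial one. A homotopy need not interact nicely with the equator — it can create and destroy many intersections with it in complicated ways. The clean route is to invoke the free-group description of $\pi_1$ and argue purely algebraically: the induced word, as an element of the free group on the $\gamma_i$, is a homotopy invariant, and $\alpha\alpha$-free words are reduced, hence normal forms, hence equal whenever the group elements are equal. Making this airtight requires (i) identifying the crossing word with the free-group element, which needs the equator to be a ``nice'' generating curve and the crossings to be transverse (guaranteed by the setup conditions and by Lemma~\ref{lem:reduced_words}), and (ii) confirming that the orientation of each crossing is determined, so that an $\alpha\alpha$ really is $\gamma_i\gamma_i^{-1}$ and not $\gamma_i\gamma_i$ — this uses that consecutive crossings of the same gap, after the perturbations of Section~\ref{sec:setup}, go in opposite directions (otherwise the arc between them would be a non-removable self-intersection, contradicting assumption~\ref{enum:arc}). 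Once these identifications are in place, parts~\ref{en:nonhomotopicwords:samehemi} and~\ref{en:nonhomotopicwords:noprefix} follow by the basepoint-sliding analysis described above, and the ``if'' directions are all constructions of explicit homotopies, which are routine.
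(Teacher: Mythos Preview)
Your plan matches the paper's: establish \ref{en:nonhomotopicwords:xloop} from the free-group structure of $\pi_1(S,x)$, then handle the $v$-loop cases by relocating the basepoint near $v$ and tracking how the $\{0,1\}$-prefix and suffix interact with winding around the obstacle $v_1$. The paper carries this out by fixing a small circle $C$ around $v$, taking $x \in C$ in the northern hemisphere, rerouting each $v$-loop $\ell_i$ to an $x$-loop $\ell_i'$ along arcs of $C$, writing $\ell_i' \sim g_1^{m_i} h^{(i)} g_1^{m_i'}$ with $m_i, m_i'$ maximal, and then proving that $\ell_1 \sim \ell_2$ as $v$-loops if and only if $h^{(1)} = h^{(2)}$; the parity condition in \ref{en:nonhomotopicwords:samehemi} then falls out of how $m_i$ depends on the prefix $x_i$. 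Part \ref{en:nonhomotopicwords:noprefix} is deduced from \ref{en:nonhomotopicwords:samehemi} (not directly from \ref{en:nonhomotopicwords:xloop}) by locally rerouting the equator near $v$ so that all initial and final arcs land in one hemisphere.

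One step in your sketch is not right as stated and is exactly where the paper is more careful. The induced word uses \emph{gap} labels $0,\dots,n$, of which there are $n+1$, while the free generators you call $\gamma_i$ are small loops around the $n$ obstacles in $V_n$; a single gap-crossing does not correspond to a single $\gamma_i^{\pm 1}$, so ``$\alpha\alpha$ is the cancelling pair $\gamma_i\gamma_i^{-1}$'' fails in that generating set. The paper resolves this by pairing consecutive letters: the two-letter subword $ij$ is sent to $g_{i,j} := g_{i+1}\cdots g_j$ (respectively $g_i^{-1}\cdots g_{j+1}^{-1}$ if $i>j$), and one then checks directly that this map takes even-length $\alpha\alpha$-free words bijectively to reduced words in the $g_i$. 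Your intuition becomes literally correct if instead you first deformation-retract $S$ onto the graph with two vertices (the hemispheres) and $n+1$ edges (the gaps): then an induced word is a closed edge-walk based at the northern vertex, $\alpha\alpha$-freeness is precisely absence of backtracking, and the standard normal form for closed walks in a graph gives \ref{en:nonhomotopicwords:xloop} immediately. Either fix closes the gap, and the rest of your outline goes through as in the paper.
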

\ifarxivversion
Lemma~\ref{lem:nonhomotopicwords} is proved in Appendix~\ref{apx:nonhomotopicwords}. 
\else
Lemma~\ref{lem:nonhomotopicwords} is proved in the full version of this paper \cite{fullversion}.
\fi
It is based on the description of the (fundamental) homotopy group of the plane with several points removed and the correspondence between words and the generators of this group. 

The idea of the part \ref{en:nonhomotopicwords:samehemi} is that we may ``unwind'' the initial and final segments that just ``wind around'' the obstacle $v$, without changing the homotopy class. The reason why, say, the prefixes $x_1$ and $x_2$ have to have matching parities (and merely $u_1 = u_2$ is not enough), is that otherwise the segments corresponding to the subwords $u_1$ and $u_2$ would have opposite polarity. One can see, say, that if two loops start and end in the northern hemisphere and induce words $02$ and $20$ (so that $u_1 = u_2 = 2$), they are not homotopic (in fact, $\ell_1$ is homotopic to $\ell_2$ reversed).
\toappendix{
\subsection{Proof of Lemma~\ref{lem:nonhomotopicwords}}\label{apx:nonhomotopicwords}
\begin{proof}[Lemma~\ref{lem:nonhomotopicwords}]
  In the proof we treat $S$ as a plane with $n$ points removed.

  We first consider the case of $x$-loops, \ref{en:nonhomotopicwords:xloop}. For every $i \in [n]$, let $g_i$ be an $x$-loop without self-intersections that circles obstacle $v_i$ so that the gap $i-1$ is crossed before the gap $i$ (thus $v_i$ is the only obstacle contained in the bounded set surrounded by $g_i$. By reversing the orientation of $g_i$, we obtain the \emph{inverse} of $g_i$, denoted by $g_i^{-1}$. It is a well known from algebraic topology  that every $x$-loop $\ell$ in $S = \R^2 \setminus V_n$ is homotopic to a concatenation of a finite sequence of \emph{elementary loops} $g_1, \dots, g_n, g_1^{-1}, \dots, g_n^{-1}$. In each homotopy class, there is a unique concatenation of elementary loops in which no two consecutive elementary loops are inverses of each other (we call such concatenations \emph{reduced}). In other words, homotopy classes of $x$-loops under the operation of concatenation form a free group with generators $g_1, \dots, g_n$. 
    
    We describe a correspondence between the word induced by $\ell$ (note that it necessarily has even length) and its homotopy class written as concatenation of elementary loops. Elementary loop $g_i$ induces a two-letter word $(i-1) i$ and $g_i^{-1}$ induces a word $i(i - 1)$. 
    For $0 \le i < j \le n$, define $g_{i,j}:= g_{i+1}g_{i+2}\dots g_j$ and for $0 \le j < i \le n$, define $g_{i, j} := g_i^{-1}g_{i-1}^{-1}\dots g_{j+1}^{-1}$.
    Note that, for any distinct $i,j$, loop $g_{i,j}$ is homotopic to any loop that induces a word $ij$. 
    This gives a bijection between even-length $\alpha\alpha$-free words and reduced concatenations $g_{i_1}^{\eps_1} \dots g_{i_k}^{\eps_k}$, where $\eps_j \in \{-1,1\}$. To be precise, given a word $w = w_1 \dots w_{2m}$, we partition it into words of length two and replace each word $ij$ by $g_{i,j}$. 
    Note that if $w$ is $\alpha\alpha$-free, the resulting concatenation of elementary loops is reduced.

      In particular, if two $x$-loops induce $\alpha\alpha$-free words, their homotopy classes are different if and only if the words are different.

      \smallskip 

      Now we deal with the case of $v$-loops, $v = v_1 \in V_n$, starting with the setting of Lemma~\ref{lem:nonhomotopicwords}.\ref{en:nonhomotopicwords:samehemi}.
      We can characterize when two $v$-loops $\ell_1, \ell_2$ are homotopic by turning them into $x$-loops with $x \notin V_n$ (cf. proof of Theorem~2 in \cite{PTT}).
      Fix a circle $C$ centered at $v$ so that all obstacles except of $v$ lie outside of it, and pick an arbitrary point $x$ in the northern hemisphere that lies on the circle $C$.
      Given a $v$-loop $\ell$, let $\alpha_\ell$ and $\omega_\ell$ be the points where $\ell$ hits the circle the first and the last time, respectively.
      Define a $x$-loop $\ell'$ by concatenating (i) the arc of the circle $C$ between $x$ and $\alpha_\ell$ that does not cross the equator, (ii) the segment of $\ell$ between $\alpha_\ell$ and $\omega_\ell$, and (iii) the arc of the circle $C$ between $\omega_\ell$ and $x$ that does not cross the equator, see Figure~\ref{fig:deobstaclisation}.
      We can assume, for $i = 1, 2$, that $\ell_i'$ induces the same word as the inner word induced by $\ell_i$.
      This is because for $C$ small enough, points $\alpha_{\ell_i}, \omega_{\ell_i}$ lie in the northern hemisphere (by our assumption on the initial and final arcs of the loops) and $\ell_i$ does not cross the equator between $v$ and $\alpha_{\ell_i}$ nor does it between $\omega_{\ell_i}$ and $v$. 
  \begin{figure}[h]
    \centering
    \includegraphics[width=.5\textwidth]{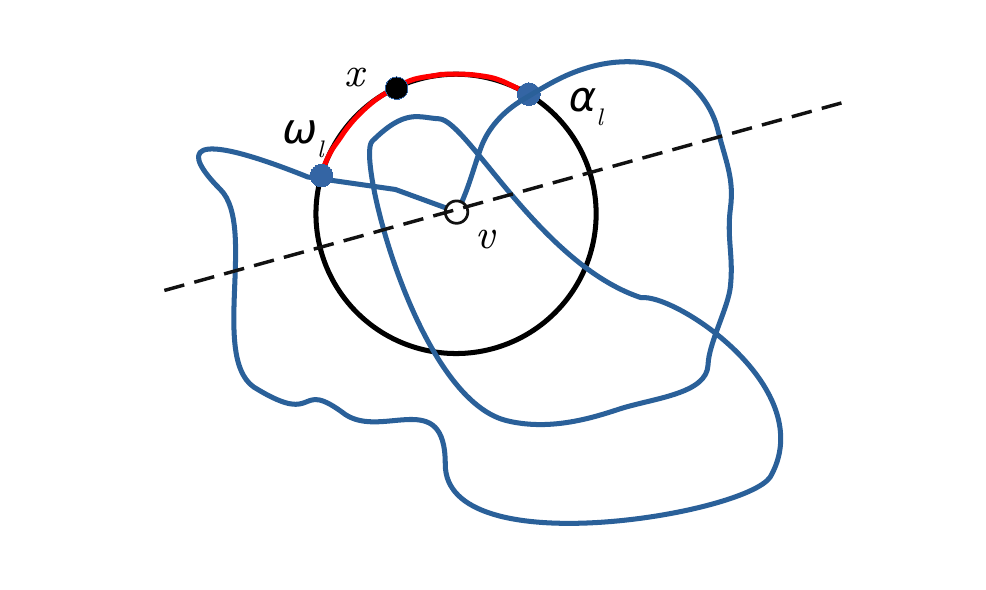}
    \caption{Converting a $v$-loop to an $x$-loop; equator denoted by a dashed line.}%
    \label{fig:deobstaclisation}
  \end{figure}
  
  Assume that an $x$-loop $\ell'$ induces a word $w = w_*\hat w w^*$, where $w_*$ and $w^*$ are maximal words consisting of letters $0$ and $1$. Suppose that $\ell' \sim g_1^mh g_1^{m'}$, where $h$ is a reduced concatenation of elementary loops starting and ending in a loop other than $g_1$ or $g_1^{-1}$. 
  We first simplify $w$ without changing $h$. Note that if we delete first two letters of $w_*$ (i.e., $01$ or $10$), we only increase or decrease $m$ by one, but do not change~$h$. Hence we can assume $w_*$ has at most one letter. Similarly we can assume $w^*$ has at most one letter. Further note that if $w_* = 0$, then $m = 1$ and $h$ starts with $g_2$, so replacing it by $w_* = 1$ only changes $m$ to $0$, without changing $h$. Similarly if $w^* = 0$, we have $m' = -1$ and $h$ ends in $g_2^{-1}$, so setting $w^* = 1$ changes $m'$ to $0$ and preserving $h$. Hence, without changing $h$, we can reduce the word $w$ to one of the forms $\hat{w}$, $1\hat{w}$, $\hat{w}1$ and $1\hat{w}1$, which correspond to $h$ of the form $h'$, $g_2h'$, $h'g_2^{-1}$, and $g_2h'g_2^{-1}$, respectively, where $h'$ stands for a reduced concatenation starting with an elementary loop other than $g_2$ and ending with an elementary loop other than $g_2^{-1}$. This implies that if $x$-loops $\ell_1', \ell_2'$ induce words as in \eqref{eq:twowords} and satisfy
  \begin{equation}
\label{eq:pref_suff}
    \ell_1' \sim g_1^{m_1} h^{(1)} g_1^{m'_1}, \quad \text{and} \quad  \ell_2' \sim g_1^{m_2} h^{(2)} g_1^{m'_2},
  \end{equation}
  with $m_i, m_i' \in \Z$ maximal, then $h^{(1)} = h^{(2)}$ if and only if words $u_1$ and $u_2$ are equal and start at a position of the same parity, that is, length of $x_1$ and $x_2$ have the same parity.

  It remains to show that $v$-loops satisfy $\ell_1 \sim \ell_2$ iff $h^{(1)} = h^{(2)}$.
  We modify each $\ell_i$, $i = 1, 2$, without changing its homotopy class, by adding an excursion from $\alpha_{\ell_i}$ to $x$ and back within the northern half of the circle $C$, and adding a similar excursion from $\omega_{\ell_i}$ to $x$ (see Figure~\ref{fig:deobstaclisation}). Thus we can assume $\ell_i$ contains $\ell'_i$ (by which we mean that the function $\ell'_i$ is a restriction of the function $\ell_i$ to a closed subinterval). Also, without changing the homotopy class of $\ell_i$, we can assume that $\ell'_i$ is a concatenation of elementary loops. Consider the initial segment of $\ell_i$ that ends at $x$ after traversing the $g_1^{m_i}$ part of $\ell'_i$. It is easy to see that  if we replace this $v$-$x$ curve by a straight $vx$ segment we do not change the homotopy class of $\ell_i$. Similarly we can replace the final segment of $\ell_i$ that starts at $x$ and traverses the $g_1^{m'_i}$ part of $\ell'_i$ by a straight segment $xv$. Hence the homotopy class of $\ell_i$ does not depend on $m$ and $m'$. In particular, if $h^{(1)} = h^{(2)}$, then $\ell_1 \sim \ell_2$. 

  For the other direction, assume that $h^{(1)} \neq h^{(2)}$. Assuming, for contradiction, that $\ell_1 \sim \ell_2$, we claim that $\ell_1' \sim g_1^i \ell_2'g_1^{j}$ for some $i, j \in \Z$. Recalling \eqref{eq:pref_suff}, this easily implies that $h^{(1)} = h^{(2)}$, giving a contradiction.
  To see the claim, recall that we can assume that, for $k = 1, 2$, loop $\ell_k'$ is a restriction of $\ell_k$ to some sub-interval of $[0,1]$. For simplicity assume that this interval is $[1/3, 2/3]$. Let $H$ be a homotopy of $v$-loops such that $H(0, \cdot) = \ell_1$, $H(1, \cdot) = \ell_2$.
  Using continuity of $H$ and the fact that $H(s, 0) = v$ for every $s \in [0,1]$, one can show that $s \mapsto H(s, 1/3)$ and $s \mapsto H(s, 2/3)$ are $x$-loops with homotopy class of the form $g_1^i$. 
  We define a continuous function $K : [0,1]^2 \to \R^2 \setminus V_n$ by 
  \[
    K(r,t) = \begin{cases}
      H(3rt, 1/3),   & t \in [0, 1/3]\\
      H(r, t), & t \in [1/3, 2/3] \\
      H(3r(1 - t), 2/3), & t \in [2/3, 1].
    \end{cases}
  \]
  Function $K$ is a homotopy of $x$-loops between $K(0, \cdot) \sim \ell_1'$ and $K(1, \cdot) \sim g_1^i\ell_2'g_1^{j}$, implying $\ell'_1 \sim g_1^i \ell'_2 g_1^j$, as desired. 
  This completes the proof of \ref{en:nonhomotopicwords:samehemi}.

  Finally, assume that the $v$-loops satisfy the conditions of the case \ref{en:nonhomotopicwords:noprefix}.
  By rerouting the equator near $v$ (see Figure~\ref{fig:adjusting_equator}) we can make sure that all loops start and end in, say, the northern hemisphere, and at the same time the word induced of each loop gains a letter $1$ at the beginning (if and only if its initial segment was originally in the southern hemisphere), and gains a letter $1$ at the end (if and only if its final segment was originally in the southern hemisphere).
  \begin{figure}[h]
    \centering
    \includegraphics[width=0.4\textwidth]{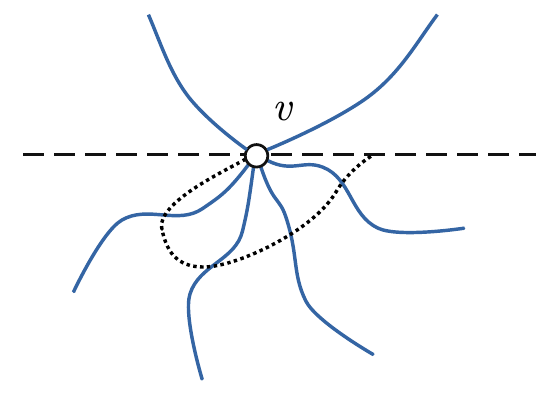}
    \caption{Rerouting of equator (dashed) that makes all loops start and end in the northern hemisphere.}%
    \label{fig:adjusting_equator}
  \end{figure}
  Using the characterization in \ref{en:nonhomotopicwords:samehemi}, this gives the claimed characterization in \ref{en:nonhomotopicwords:noprefix}.
\end{proof}
}


  \subsection{Subwords forcing intersections}
    The equator has two natural orientations. 
    For any ordered triple $(a, b, c)$ of distinct gaps we assign the orientation of the equator such that if we circle the equator starting from the gap $a$, we encounter the gap $b$ before $c$. 
  In particular, for any three distinct gaps $a,b,c$ triples $(a,b,c), (b,c,a), (c,a,b)$ have the same orientation while triples $(a,b,c)$ and $(c,b,a)$ have opposite orientations.
  For example, if $n \ge 2$, recalling our labeling of gaps from Section~\ref{sec:setup} (in particular that the vertex $v = v_1$ is a special gap) we have that $(0,v,1)$, $(v,1,2)$ and $(0,1,2)$ have the same orientation.

  \begin{lemma}\label{lem:crossing_orient}
    Let $k \ge 0$ be an integer. Consider two segments of the same polarity corresponding to $\alpha\alpha$-free words $a_0 a_1\dots a_k a_{k+1}$ and $b_0 b_1 \dots b_k b_{k+1}$ such that for $i = 1, \dots, k$ we have $a_i = b_i$, while $a_0 \neq b_0$ and $a_{k+1} \ne b_{k+1}$. Suppose that $(a_0, b_0, a_1)$ and $(a_{k}, b_{k+1}, a_{k+1})$ have opposite orientations for even $k$, and the same orientation for odd $k$. Then there is $i \in \left\{ 0, \dots, k \right\}$ such that the $a_i a_{i+1}$-arc of the first word intersects the $b_i b_{i+1}$-arc of the second word. 
  \end{lemma}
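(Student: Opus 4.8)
The plan is to prove the contrapositive: assuming that for every $i\in\{0,\dots,k\}$ the $a_ia_{i+1}$-arc of the first segment is \emph{disjoint} from the $b_ib_{i+1}$-arc of the second segment, I will derive a contradiction with the orientation hypothesis. The point is that each ``no crossing in hemisphere $i$'' imposes one combinatorial constraint on how the endpoints of the two arcs sit on the equator, and the $k+1$ constraints together force the two triples $(a_0,b_0,a_1)$ and $(a_k,b_{k+1},a_{k+1})$ into exactly the orientation relation opposite to the assumed one.

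First the set-up. Fix one orientation of the equator $C$. Write $p_0,\dots,p_{k+1}$ for the successive equator crossings of the first segment (so $p_i$ lies in gap $a_i$) and $q_0,\dots,q_{k+1}$ for those of the second (so $q_i$ lies in gap $b_i$); after a slight perturbation away from the arcs and from the intersection points — which do not lie on $C$ by our standing assumptions — we may take all $2(k+2)$ of these points to be distinct. Since the two segments have the same polarity and consecutive arcs of a segment lie in opposite hemispheres, for each $i$ the arc $A_i$ (the $a_ia_{i+1}$-arc of segment one) and the arc $B_i$ (the $b_ib_{i+1}$-arc of segment two) lie in the \emph{same} closed hemisphere, and each is a simple curve since no arc self-intersects. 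The sole topological input is the Jordan-curve fact that two simple arcs in a closed disc whose endpoints lie on the boundary circle must meet if those two endpoint pairs \emph{interleave} around the circle. So the assumption says: for every $i$, the pair $\{p_i,p_{i+1}\}$ and the pair $\{q_i,q_{i+1}\}$ do not interleave on $C$.

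Now read off the constraints. For $1\le j\le k$ the points $p_j$ and $q_j$ lie in the \emph{same} gap $a_j=b_j$; set $\tau_j=+$ if $p_j$ precedes $q_j$ in that gap (in the fixed orientation) and $\tau_j=-$ otherwise. Three routine checks — in each case simply list the four relevant points in cyclic order on $C$ — give: (i) for $1\le i\le k-1$, where $A_i,B_i$ both run from gap $a_i$ to gap $a_{i+1}$, non-interleaving holds iff $\tau_i\ne\tau_{i+1}$; (ii) for $i=0$ with $k\ge1$, writing $O_1=+$ iff $(a_0,b_0,a_1)$ is positively oriented (the three gaps are distinct, since $\alpha\alpha$-freeness and $a_1=b_1$ give $a_1\ne b_0$), non-interleaving of $A_0,B_0$ holds iff $\tau_1=-O_1$; (iii) symmetrically, for $i=k$ with $k\ge1$, writing $O_2=+$ iff $(a_k,b_{k+1},a_{k+1})$ is positively oriented, non-interleaving of $A_k,B_k$ holds iff $\tau_k=O_2$. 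The degenerate case $k=0$ needs no $\tau$'s: there, non-interleaving of $A_0,B_0$ is precisely the statement that $(a_0,b_0,a_1)$ and $(a_0,b_1,a_1)$ have the \emph{same} orientation, which directly contradicts the hypothesis.

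Finally, assume $k\ge1$ and that all $A_i,B_i$ are disjoint. By (i) the signs $\tau_1,\dots,\tau_k$ alternate, so $\tau_k=(-1)^{k-1}\tau_1$; combining with (ii)--(iii) yields $O_2=\tau_k=(-1)^{k-1}\tau_1=(-1)^k O_1$. Hence the triples $(a_0,b_0,a_1)$ and $(a_k,b_{k+1},a_{k+1})$ have the same orientation exactly when $k$ is even — the opposite of the lemma's hypothesis — a contradiction, so some $A_i$ must meet $B_i$. I expect the only genuine care to be in two spots: keeping the orientation bookkeeping globally consistent (all of $\tau_1,\dots,\tau_k$, $O_1$, $O_2$ must refer to the single fixed orientation of $C$), and verifying the three elementary interleaving checks in (i)--(iii); neither is a real obstacle, and everything else is forced.
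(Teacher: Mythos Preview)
Your proof is correct and follows essentially the same approach as the paper's: assume for contradiction that no pair of corresponding arcs meets, use the Jordan-curve interleaving criterion in each hemisphere, and propagate a sign/orientation along the chain to reach a contradiction with the hypothesis. The paper packages the propagation as an induction on $k$ using orientations of point-triples $(x_i,y_{i+1},x_{i+1})$, while you unroll the same induction via the alternating signs $\tau_1,\dots,\tau_k$; the content is identical.
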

  \paragraph{Proof idea.}
  Assume, for a contradiction, that the $a_i a_{i+1}$-arc and the $b_i b_{i+1}$-arc are disjoint for every $i$.
  It follows inductively that the orientation of the triple $(a_i, b_{i+1}, a_{i+1})$ is uniquely determined for every $i$.
  In particular, it must be the same as the orientation of $(a_0, b_0, a_1)$ for even $i$ and opposite otherwise. For $i = k$, we arrive at a contradiction.
\ifarxivversion
  The full proof is postponed to Appendix~\ref{apx:crossing_orient}.
\else
  The details of the proof are included in the full version \cite{fullversion}.
\fi
  \toappendix{
  \subsection{Proof of Lemma~\ref{lem:crossing_orient}}\label{apx:crossing_orient}
  \begin{proof}[Lemma~\ref{lem:crossing_orient}]
    For $k \ge 1$, we denote by $x_i$, $i = 0, \dots, k + 1$, the points where the first segment intersects the equator (thus $x_i$ belonging to the gap $a_i$) and by $y_i$, $i = 0, \dots, k + 1$, the points where the second segment intersects the equator (so $y_i$ belonging to the gap $b_i$). We can assume that $x_i \neq y_i$ for every $i$.

    We define the orientation of a triple $(z_1, z_2, z_3)$ of distinct points on the equator similarly as for gaps: if you travel from $z_1$ in this orientation, you encounter $z_2$ before $z_3$. 

    We prove the lemma by contradiction. For this we assume that the corresponding arcs do not intersect and apply the following claim, noting that the orientation of $(a_0, b_0, a_1)$ equals the orientation of $(x_0, y_0, x_1)$ and the orientation of $(a_{k}, b_{k+1}, a_{k+1})$ equals the orientation of $(x_{k}, y_{k+1}, x_{k+1})$.
    \begin{claim}
      Consider two segments of the same polarity, which intersect the equator at disjoint sets of points $(x_i : i = 0, \dots, k+1)$ and $(y_i : i = 1, \dots, k+1)$, respectively. Assume that for every $i = 1, \dots, k$ points $x_i, y_i$ belong to the same gap, and for every $i = 1, \dots, k+1$ points $x_i, x_{i-1}$ belong to different gaps (note that we allow the corresponding terminal points of the segments to belong to the same gap). If, for $i = 1, \dots, k+1$, the arc between $x_{i-1}$ and $x_i$ does not intersect the arc between $y_{i-1}$ and $y_i$, then the orientations of $(x_0,y_0,x_1)$ and $(x_k,y_{k+1}, x_{k+1})$ is the same for even $k$ and opposite for odd $k$. 
    \end{claim}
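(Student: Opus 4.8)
The plan is to prove the Claim by induction on $k$, showing a stronger statement that tracks the orientation of every intermediate triple. Concretely, I would prove by induction on $i$ that for each $i = 1, \dots, k+1$, the orientation of the triple $(x_{i-1}, y_{i-1}, x_i)$ equals the orientation of $(x_0, y_0, x_1)$ when $i$ is odd, and the opposite orientation when $i$ is even (here for $i = k+1$ we also need $y_k$ to make sense, which it does since the $y$-points are indexed $0, \dots, k+1$ — note the Claim as stated indexes $y$ from $1$, but the hypothesis ``arc between $x_{i-1}$ and $x_i$ does not meet arc between $y_{i-1}$ and $y_i$'' requires $y_0$, so I read the $y$-range as $0,\dots,k+1$ consistently with how the Claim is applied). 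The base case $i = 1$ is the triple $(x_0, y_0, x_1)$ itself, which trivially has the right orientation.

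The inductive step is the geometric heart of the argument. Assume the orientation of $(x_{i-1}, y_{i-1}, x_i)$ is known. I want to deduce the orientation of $(x_i, y_i, x_{i+1})$. The key local fact is this: the arc of the first segment from $x_{i-1}$ to $x_i$ and the arc of the second segment from $y_{i-1}$ to $y_i$ together with the two equator-arcs $x_{i-1}y_{i-1}$-equator-arc, wait — rather, I would argue as follows. Since consecutive arcs of a segment alternate hemispheres, the arc $x_{i-1}x_i$ (of the first segment) lies entirely in one open hemisphere, and the arc $y_{i-1}y_i$ (of the second segment) lies in that same hemisphere, because the two segments have the same polarity (and the same parity of position, since they are ``aligned'' at the matching gaps $a_i = b_i$). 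By hypothesis these two arcs are disjoint. In a disk (one hemisphere bounded by the equator), two disjoint arcs with endpoints on the boundary circle either ``nest'' or lie ``side by side''; in both cases this disjointness, combined with knowing how $\{x_{i-1}, x_i\}$ interleaves with $\{y_{i-1}, y_i\}$ on the circle (which is encoded by the orientation of $(x_{i-1}, y_{i-1}, x_i)$ and the fact that $x_i, y_i$ share a gap while $x_{i-1}, x_i$ do not), forces a unique cyclic order of the four points $x_{i-1}, y_{i-1}, x_i, y_i$ on the equator. From that cyclic order one reads off the orientation of $(x_i, y_i, x_{i+1})$ after passing through $x_i$ and the next gap: the point $x_i$ and $y_i$ are in the same gap, adjacent on the equator, and the segment continues to the gap of $a_{i+1}$ on the opposite side; tracking this shows the orientation of $(x_i, y_i, x_{i+1})$ is forced to be the reverse of that of $(x_{i-1}, y_{i-1}, x_i)$. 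Iterating the reversal gives the claimed parity dependence, and at $i = k+1$ we get that $(x_k, y_k, x_{k+1})$ — equivalently, since $x_{k+1}, y_{k+1}$ may share the terminal gap and one checks $(x_k, y_k, x_{k+1})$ and $(x_k, y_{k+1}, x_{k+1})$ agree — the triple $(x_k, y_{k+1}, x_{k+1})$ has the orientation of $(x_0, y_0, x_1)$ for even $k$ and the opposite for odd $k$, which is exactly the contrapositive conclusion we need.

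To finish the Lemma from the Claim: the hypothesis of Lemma~\ref{lem:crossing_orient} says precisely that $(a_0, b_0, a_1)$ and $(a_k, b_{k+1}, a_{k+1})$ have \emph{opposite} orientation for even $k$ and the \emph{same} for odd $k$. Since the orientation of a triple of gaps matches the orientation of the corresponding triple of intersection points on the equator, this directly contradicts the conclusion of the Claim (which asserts the same orientation for even $k$, opposite for odd $k$) under the assumption that all corresponding arcs are disjoint. Hence some $a_i a_{i+1}$-arc must meet the corresponding $b_i b_{i+1}$-arc, proving the Lemma.

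The step I expect to be the main obstacle is the inductive step's geometric bookkeeping: carefully justifying that disjointness of the two arcs in a single hemisphere, plus the combinatorial data of which gaps the endpoints lie in and how they are ordered, pins down the cyclic order of the four endpoints uniquely and forces the orientation to flip. One has to be meticulous about the two sub-cases (whether $x_{i-1}$ and $y_{i-1}$ lie in the same gap or in different gaps — they share a gap only when $i-1 \in \{1,\dots,k\}$, and at the ends they need not), about the possibility that terminal points share a gap, and about the convention relating ``orientation of a triple of points'' to ``orientation of a triple of gaps.'' Everything else — the induction skeleton and the reduction of the Lemma to the Claim — is routine once this local lemma is nailed down.
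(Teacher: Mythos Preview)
Your approach is correct and essentially the paper's: both proofs induct and rely on the same two local facts, namely (a) disjoint arcs in a common hemisphere have non-interlacing endpoints on the equator (so $(x_{i-1},y_{i-1},x_i)$ and $(x_{i-1},y_i,x_i)$ have the same orientation), and (b) since $x_i,y_i$ share a gap while $x_{i-1},x_{i+1}$ lie outside that gap, $(x_{i-1},y_i,x_i)$ and $(x_i,y_i,x_{i+1})$ have opposite orientations. The paper just organizes this more cleanly---fact (a) is its base case $k=0$ and fact (b) is isolated as a separate displayed claim---while you fold both into each inductive step; your only slip is the final passage from $(x_k,y_k,x_{k+1})$ to $(x_k,y_{k+1},x_{k+1})$, which is one more application of (a) rather than anything about the terminal points sharing a gap.
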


    We prove the above claim by induction on $k$. The base case $k = 0$ is trivial, since all it says is that if two arcs do not intersect, then travelling along the equator in a certain direction between the endpoints of one of the arcs we will pass through both endpoints of the other arc.

    Now assume $k \ge 1$. By induction hypothesis applied to points $x_0, x_1$ and $y_0, y_1$ we obtain that $(x_0, y_0, x_1)$ has the same orientation as $(x_0, y_1, x_1)$ and by the induction hypothesis applied to points $x_1, \dots, x_{k+1}$ and $y_1, \dots, y_{k+1}$ we get that
    the orientations of $(x_1,y_1,x_2)$ and $(x_k,y_{k+1}, x_{k+1})$ are the same for even $k - 1$ (thus odd $k$) and opposite for odd $k - 1$ (thus even $k$).
    It therefore remains only to verify that 
    \begin{equation}
      \label{eq:step}
      (x_0,y_1,x_1) \text{ and }  (x_1, y_1, x_2) \text{ have opposite orientations.}
    \end{equation}
    To see \eqref{eq:step}, let $a$ be the gap to which points $x_1, y_1$ belong. Removing points $x_1, y_1$ from the equator we obtain two sets $A$ and $B$. One of them, say $A$, must be contained in the gap $a$. Since $x_0$ and $x_2$ do not belong to the gap $a$, they are both in the set $B$ which means $(x_1, x_0, y_1)$ and $(x_1, x_2, y_1)$ have the same orientation which is equivalent to \eqref{eq:step}.
  \end{proof}
  }
  
  \subsection{Windings in $v$-loops}
  We now focus on $v$-loops and describe the intersections forced by specific alternating words.
  
  For a word $w$ we write $w^k$ a concatenation of $k$ copies of $w$, say $(ab)^2 = abab$.
  Given an obstacle $v_i$, $i \ne 1$, let $a, b$ be the gaps incident to $v_i$. For integer $s \ge 1$, an \emph{$s$-winding around $v_i$} is a $w$-segment, where $w$ has a form $tw'u$, where $w'$ is of the form $(ab)^sa$, $(ba)^sb$, $(ab)^{s+1}$ or $(ba)^{s+1}$ and $t, u$ are letters other than $\{a, b\}$.
  Assuming the gaps incident to $v_1$ are $0$ and $1$, an \emph{$s$-winding around $v := v_1$} is a $w$-segment with $w$ of the same form as above (for $\{a,b\} = \{0, 1\}$), but with $t, u$ being letters other than $0, 1$, and $v$ (the difference from the first case is that we do not allow $t= v$ or $u = v$), see the left part of Figure~\ref{fig:winding_snail}.

  The proofs of the following lemmas are sketched at the end of this subsection; the detailed proofs are \ifarxivversion deferred to the appendix \else included the full version of this paper \cite{fullversion} \fi.
  
  \begin{figure}[h]
    \centering
    \includegraphics[width=0.7\textwidth]{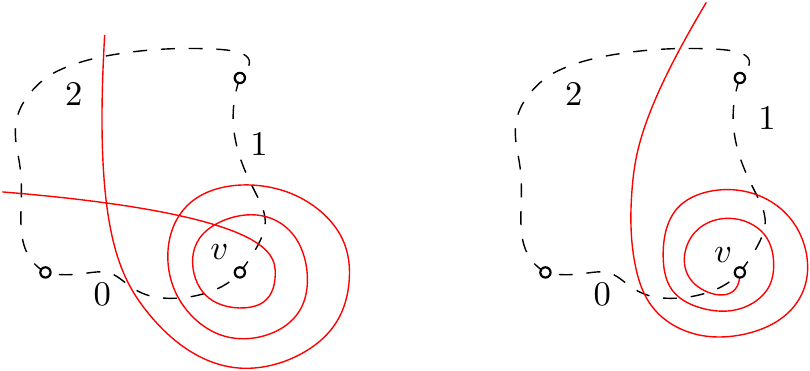}
    \caption{Left: a 2-winding around $v$ with word $2(01)^32$; right: a $(2,2)$-snail with word $2(01)^20v$.}%
    \label{fig:winding_snail}
  \end{figure}
  
\begin{lemma}\label{lem:windings_orient}
Suppose $S$ and $T$ are an $s$-winding and a $t$-winding, respectively, both around the same obstacle $v \in V_n$. Then 
\begin{romenumerate}
\item \label{en:windings_orient:self}
  Segment $S$ has at least $s$ self-intersections.

\item \label{en:windings_orient:inters}
  $S$ and $T$ have at least $2\cdot\min\{s,t\}$ mutual intersections (provided $S\neq T$). 
\end{romenumerate}
\end{lemma}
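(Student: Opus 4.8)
The plan is to extract every required (self-)intersection from Lemma~\ref{lem:crossing_orient}, applied repeatedly to suitably chosen sub-segments. First normalise the drawing as in Section~\ref{sec:setup} so that all (self-)intersections are finite in number and transversal; if $S=T$ in part~\ref{en:windings_orient:inters} there is nothing to prove, so assume $S\neq T$. Let $a,b$ be the two gaps incident to $v$ (so $\{a,b\}=\{0,1\}$ if $v=v_1$), and isolate the two facts about the equator the whole argument rests on: $a$ and $b$ are consecutive among the gaps crossed by winding segments (the arc of the equator between them contains no gap other than the special one $v$, which a winding segment never crosses), and the outer letters of a winding word lie outside $\{a,b\}$. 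Write the word induced by $S$ as $c_0c_1\cdots c_m$ with $c_0,c_m\notin\{a,b\}$ and $c_1\cdots c_{m-1}$ an alternating $ab$-word, and similarly write $T$'s word as $d_0\cdots d_p$; inspecting the four admissible forms of a winding word gives $m\in\{2s+2,2s+3\}$ and $p\in\{2t+2,2t+3\}$, so the alternating stretch of $S$ contains at least $2s$ interior arcs.

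\textbf{Part~\ref{en:windings_orient:self}.} For $j=1,\dots,s$ I would take $A_j$ to be the sub-segment of $S$ from its $0$-th crossing to its $(m-2j)$-th crossing, and $B_j$ the sub-segment from its $(2j)$-th crossing to its $m$-th crossing. Aligned position by position, $A_j$ and $B_j$ have equal length; positions $1,\dots,m-1-2j$ carry equal letters (we stay inside the alternating stretch and the shift $2j$ is even), whereas position $0$ reads $c_0\notin\{a,b\}$ against $c_{2j}\in\{a,b\}$, and the last position reads $c_{m-2j}\in\{a,b\}$ against $c_m\notin\{a,b\}$, so the two endpoints differ. Since $A_j$ and $B_j$ begin with the $0$-th, resp.\ the $(2j)$-th, arc of $S$, they have the same polarity. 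Finally, the two orientation-triples occurring in Lemma~\ref{lem:crossing_orient}, namely $(c_0,c_{2j},c_1)$ and $(c_{m-1-2j},c_m,c_{m-2j})$, are independent of $j$ (the indices $2j$, $m-2j$, $m-1-2j$ never leave their parity classes within the alternating stretch), as is the parity of $k_j=m-1-2j$; a short case check over the four forms of the winding word, using only that $a,b$ are consecutive and $c_0,c_m\notin\{a,b\}$, shows the orientation hypothesis holds. Lemma~\ref{lem:crossing_orient} then yields, for each $j$, an index $i_j$ with the $i_j$-th arc of $S$ crossing its $(2j+i_j)$-th arc; these self-intersections lie on arc-pairs of distance exactly $2j$, hence are pairwise distinct, giving at least $s$ of them.

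\textbf{Part~\ref{en:windings_orient:inters}.} Assume $s\le t$; I would produce $2s$ distinct crossings of $S$ with $T$. A crossing is pinned down by the pair (arc of $S$, arc of $T$) carrying it, so two crossings obtained from applications of Lemma~\ref{lem:crossing_orient} to sub-segments $A\subseteq S,B\subseteq T$ and $A'\subseteq S,B'\subseteq T$ are distinct as soon as the ``offsets'' $(\text{start of }A)-(\text{start of }B)$ and $(\text{start of }A')-(\text{start of }B')$ differ. I therefore apply Lemma~\ref{lem:crossing_orient} for $2s$ distinct values of this offset: for each one, $A$ is a prefix or a suffix of $S$ reaching deep into its alternating stretch and $B$ is a sub-segment of $T$ positioned so that the two alternating stretches overlap in matching phase with exactly one end of the overlap being a non-winding letter (reversing $T$ via Remark~\ref{rem:reversing_segments} when the two alternating stretches begin with opposite letters); because $S$ and $T$ each have at least $2s$ winding arcs, the offsets can be chosen to fill an arithmetic progression of step $2$ and length $2s$. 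The orientation hypothesis is verified exactly as in Part~\ref{en:windings_orient:self}, the relevant triples involving only $a,b$ and the outer gaps. This produces $2\min\{s,t\}$ pairwise distinct crossings.

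\textbf{Main obstacle.} The delicate point lies entirely in Part~\ref{en:windings_orient:inters}: one must check the orientation hypothesis of Lemma~\ref{lem:crossing_orient} for the $S$--$T$ alignments and, more importantly, verify that $2\min\{s,t\}$ distinct offsets can genuinely be realised inside the linear length budget of the two words --- a few extremal small pairs $(s,t)$, together with the two cases according to whether the alternating stretches of $S$ and $T$ begin with the same letter, will have to be handled by hand. Part~\ref{en:windings_orient:self} is then routine once the alignment above is written down.
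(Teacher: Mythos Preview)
Your proposal is correct and follows essentially the same route as the paper. In both arguments each forced crossing comes from a single application of Lemma~\ref{lem:crossing_orient} to an initial sub-segment of one winding against a final sub-segment of the other (for part~\ref{en:windings_orient:self} the ``other'' is $S$ itself), and distinctness of the resulting crossings is argued via the shift between the two sub-segments---what you call the \emph{offset}. The paper parametrises the pairs by the length $i$ of the alternating overlap and organises the polarity and orientation checks into five explicit cases according to whether each winding is of even or odd type and whether their polarities agree; your offset language is just a relabelling of the same family of pairs, and the ``short case check'' you announce is precisely that five-case analysis. One small inaccuracy: the $2\min\{s,t\}$ offsets do not in general form a single arithmetic progression of step~$2$---they split into two blocks (initial-of-$S$ versus final-of-$T$ and final-of-$S$ versus initial-of-$T$), which are disjoint because $s\le t$ forces one block to sit at negative offsets and the other at positive ones; this is exactly how the paper separates the two halves of the count.
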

\toappendix{
\subsection{Proofs of lemmas on the windings in $v$-loops}\label{apx:windings}
\begin{proof}[Lemma~\ref{lem:windings_orient}]
  We make a convention that $c_i$, $i = 1, 2,\dots$ stand for an element of $(V_n \setminus \{a,b\}) \cup \{x\}$ if $v \neq x$ and element of $V_n \setminus \{a,b\}$ if $v = x$.
  We say that an $s$-winding is \emph{even} if it is of the form $c_1(ab)^{s+1}c_2$ and otherwise it is \emph{odd} and has form $c_1(ab)^s ac_2$. Note that by reversing an odd winding we change its polarity without changing the order of the letters in the parentheses. On the other hand, by reversing an even winding, we change the order of letters in the parentheses without changing polarity. 

  Given a pair of \emph{even} windings, we can assume without loss of generality (by optionally renaming the letters $a,b$ and/or reversing), that it is of one of the following forms:
  \begin{enumerate}
    \item \label{en:EEsymm}
      $c_1(ab)^{s+1}c_2$, $c_3(ab)^{t+1}c_4$, same polarity
    \item \label{en:EEasym}
    $c_1(ab)^{s+1}c_2$, $c_3(ba)^{t+1}c_4$, opposite polarity
  \end{enumerate}		
  Given pair of an \emph{even} winding and an \emph{odd} winding we can write it in the form
  \begin{enumerate}[resume]
    \item \label{en:EO}
    $c_1(ab)^{s+1}c_2$, $c_3(ab)^{t}ac_4$, same polarity
  \end{enumerate}	
  Finally, given a pair of \emph{odd} windings, we can write it in one of the following forms: 
  \begin{enumerate}[resume]
    \item \label{en:OOsym}
    $c_1(ab)^{s}ac_2$, $c_3(ab)^{t}ac_4$, same polarity
    \item \label{en:OOasym}
    $c_1(ab)^{s}ac_2$, $c_3(ba)^{t}bc_4$, opposite polarity
  \end{enumerate}		
  We claim that for each of the above forms there are at least $\min \{s,t\}$ pairs of segments (with the same number of letters) of the same polarity consisting of an initial segment of the $s$-winding and a final segment of the $t$-winding that intersect. Namely, in each case we choose these pairs so that they match one of the forms: (A) $(c_1(ab)^ia, b(ab)^ic_4)$ or (B) $(c_1(ab)^i, (ba)^ic_4)$.

  Lemma~\ref{lem:crossing_orient} implies that a segment pair (of the same polarity) of form (A) or (B) implies an intersection between corresponding arcs (for example, if we choose initial-final segments of the form (A), for $i = 1, 2, \dots$, then for every $i$ there is $j = j(i)$ so that an intersection occurs between the arcs corresponding to $j$th and $(j+1)$th letter in each of the words $c_1(ab)^ia$ and $b(ab)^ic_4$).
This makes sure that each initial-final segment pair gives entails a distinct intersection between the $s$-winding and the $t$-winding.

There is one thing that needs some care. Lemma~\ref{lem:crossing_orient} assumes the same polarity of the segments. Note that all initial segments of the $s$-winding inherit its polarity. On the other hand, the polarity of the final segment of the $t$-winding alternates as the number of letters increases. One can check, in each of the five cases, that the segments that we chose indeed have the same polarity. For example, in the case \ref{en:EEsymm}, we have to choose even-letter segments, since they are forced by the form of the $t$-winding. But since the windings have the same polarity, the $t$-winding is even, and we choose even-length segments, the final segment of the $t$-winding turns out to have the same polarity as the $s$-winding. The remaining cases can be checked similarly. 

It is also easy to see, that in each case the number of pairs we chose is at least $\min \{s,t\}$. In particular this implies part \ref{en:windings_orient:self} of the lemma, since the argument applies to the case $S = T$.

Similarly, we count number of pairs of final segments of the $s$-winding and initial segments of the $t$-winding that force intersections. The form of the segments now is not precisely of the form (A) or (B), but may require swapping the letters $a$ and $b$ (which clearly does not change the condition required by Lemma~\ref{lem:crossing_orient}). Namely, we choose even-length segments in the cases \ref{en:EEsymm}, \ref{en:EO}, and \ref{en:OOasym}, while we choose odd-length segments in the cases \ref{en:EEasym} and \ref{en:OOsym}. Again the number of choices is at least $\min\left\{ s, t \right\}$ in each of the five cases. It is easy to see that for $S \neq T$ the intersections implied by final-initial segment pairs are different from those implied by the initial-final segment pairs we discussed before. This completes the proof of \ref{en:windings_orient:inters}.
\end{proof}
}

We excluded from the definition of windings the case where the obstacle is $v_1$ and the alternating sequence appears at the beginning/end of the (inner) word.	
Given a positive integer $m$, let $(ab)^{-m} := (ba)^m$ and let $(ab)^0$ stand for an empty word. Given an integer $s$ and a letter $a$ other than $0, 1$ (but possibly $v$), by a \emph{$(s,a)$-snail} we call a $w$-segment where $w$ is a $\alpha\alpha$-free word of the form 
\[
  v(01)^sw'a, \text{ where } w' \in \{0, 1, 01, 10\}, a \notin \{0,1\},
\]
see the right part of Figure~\ref{fig:winding_snail}.
(Note: since $w$ is $\alpha\alpha$-free, we cannot, say, have $w' \in \{1, 10\}$ if $s > 0$.)

  \begin{lemma}\label{lem:x_windings}
    Consider a $(s,a)$-snail and a $(t,b)$-snail of the same polarity. 
    If $s t < 0$, the snails intersect at least $\min \{|s|, |t|\}$ times. 
    If $s t > 0$ and $a, b \neq v$, then the snails intersect at least $|s - t| - 1$ times. 
%
%
%
  \end{lemma}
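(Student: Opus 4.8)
### Proof Plan for Lemma~\ref{lem:x_windings}

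The plan is to reduce the lemma to Lemma~\ref{lem:crossing_orient} in essentially the same way as Lemma~\ref{lem:windings_orient}, with the difference that a snail winds around the special obstacle $v = v_1$ using the two gaps $0$ and $1$ incident to it, so the distinguished letter $v$ plays the role that an ordinary terminal letter $c_i$ played in the winding case. Concretely, I would first write out a $(s,a)$-snail as a $w$-segment with $w = v(01)^s w' a$ (or its reverse $\overline w$, which I may use freely by Remark~\ref{rem:reversing_segments}, possibly flipping polarity and swapping the roles of $0$ and $1$). The key observation is that the prefix $v(01)^s$ (resp.\ the reversed suffix) provides a long alternating block whose arcs are ``pinned'' between gaps $0$ and $1$, and between consecutive matching arcs of two snails one gets a forced crossing by Lemma~\ref{lem:crossing_orient}, exactly as for windings.

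For the case $st < 0$: here the two snails wind in opposite rotational senses around $v$. After normalizing by reversals, one snail looks like $v(01)^{|s|}\cdots$ and the other like $v(10)^{|t|}\cdots$, i.e.\ the alternating blocks have opposite ``phase.'' I would extract $\min\{|s|,|t|\}$ pairs of initial subsegments — one from each snail, of the same length and same polarity — each pair matching a pattern of the form $(v(01)^i 0,\ v(10)^i\cdots)$ handled by Lemma~\ref{lem:crossing_orient} (the orientation hypothesis of that lemma is satisfied precisely because the blocks have opposite phase and share the common endpoint $v$). Each such pair forces a distinct crossing between the two snails, giving $\min\{|s|,|t|\}$ intersections. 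The polarity bookkeeping is the same alternation-of-parity issue as in Lemma~\ref{lem:windings_orient}, and one checks that among the $i = 1, 2, \dots$ one can always select subsegments of the parity dictated by the other snail's block while keeping polarities matched.

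For the case $st > 0$ with $a, b \neq v$: now both snails wind the same way, and the crude counting above would be vacuous (the forced crossings near $v$ can ``cancel''). Instead I would compare the two snails directly as $\alpha\alpha$-free words. Assume WLOG $s \ge t > 0$ (the case $s, t < 0$ is symmetric by a global reversal). Write the first word as $v(01)^s w_1' a$ and the second as $v(01)^t w_2' b$; they agree on the prefix $v(01)^t$ but then the first continues $(01)^{s-t}\cdots$ while the second has already left the $\{0,1\}$-block (since $w_2' \in \{0,1,01,10\}$ and $b \notin \{0,1\}$). So after the common prefix the two segments disagree, and — crucially — because $a, b \neq v$, neither segment re-enters the gap labeled $v$, so the equator point $v$ behaves just like an ordinary obstacle and one is in the setting of Lemma~\ref{lem:crossing_orient} with a long matching middle block $(01)^t$ (or $(01)^{t-1}$ after trimming endpoints). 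Tracking the orientation of the triples $(a_i, b_{i+1}, a_{i+1})$ through the $s - t$ ``extra'' windings of the first snail, exactly as in the proof idea of Lemma~\ref{lem:crossing_orient}, forces at least $|s-t|-1$ crossings between the arcs (the ``$-1$'' absorbing the two boundary arcs where the orientation data is not yet pinned down). The main obstacle, and the step I expect to need the most care, is precisely this orientation-tracking argument in the $st>0$ case: one must verify that the hypothesis ``$(a_0,b_0,a_1)$ and $(a_k,b_{k+1},a_{k+1})$ have (op)posite orientations'' of Lemma~\ref{lem:crossing_orient} genuinely holds for the right choice of trimmed subsegments, using only that $a,b \neq v$ and the $\alpha\alpha$-freeness — and to confirm that the polarities of the chosen subsegments agree, since, as in Lemma~\ref{lem:windings_orient}, the polarity of a subsegment of a snail alternates with its length.
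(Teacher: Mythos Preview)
Your high-level strategy --- reduce to Lemma~\ref{lem:crossing_orient} by extracting suitable pairs of subsegments from the two snails --- is exactly the paper's. But the specific subsegment pairs you describe do not meet the hypotheses of that lemma, and choosing the right pairs is where the actual content of the proof lies.

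For the case $st<0$: taking two \emph{initial} subsegments of the form $v(01)^i0$ and $v(10)^i\cdots$ does not work. Both begin with the letter $v$, so $a_0=b_0$, which already violates the hypothesis of Lemma~\ref{lem:crossing_orient}; moreover their interiors $(01)^i$ and $(10)^i$ do not coincide, so there is no ``long matching middle block'' at all. The paper's device is to \emph{reverse} the second initial subsegment (it has even length, so by Remark~\ref{rem:reversing_segments} polarity is preserved): one then compares $v(01)^i0$ against $1(01)^iv$, whose interiors agree and whose endpoints differ at both ends. The orientation check becomes $(v,1,0)$ versus $(1,v,0)$, which are opposite, and the lemma applies for each $i=1,\dots,\min\{|s|,|t|\}$.

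For the case $st>0$: your plan seems to be a single alignment (common prefix $v(01)^t$, then ``track orientations through the $s-t$ extra windings''). But Lemma~\ref{lem:crossing_orient} yields one crossing per application, and its inductive proof does not produce additional crossings from extra windings on one side only --- the two words it compares must have the \emph{same} length and agree everywhere except at the two ends. The paper instead applies the lemma $s-t-1$ separate times: it compares the \emph{entire} shorter snail $w^{(2)}$ (starting with $v$, ending with $b\notin\{0,1,v\}$) against a \emph{sliding} subword of the same length inside the $(01)^s$-block of $w^{(1)}$ (which therefore starts with $0$ or $1$, not $v$, so now $a_0\ne b_0$). Your correct observation that $b\ne v$ is precisely what is used here --- it forces the terminal triple to have the right orientation relative to $(1,v,0)$. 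Each shift of the window lands on a fresh arc of $w^{(1)}$, so the $s-t-1$ forced crossings are distinct; that, not a boundary effect, is where the ``$-1$'' comes from.
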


\toappendix{
  \begin{proof}[Lemma~\ref{lem:x_windings}]
    Assume that the $(s,a)$-snail induces word $w^{(1)}$ and the $(t,b)$-snail induces word~$w^{(2)}$.

    Case $st < 0$. Assume without loss of generality that $t < 0 < s$, so that $w^{(1)}$ starts with $\tl(01)^{s}0$, and  $w^{(2)}$ starts with $\tl(10)^{-t}1$ with $-t > 0$.
    For $i = 1, \dots, \min \{s,-t\}$ consider initial segments of $w^{(1)}, w^{(2)}$ of length $2i + 2$ (the second one reversed), so that they induce words
    \begin{equation}
      \label{eq:opposite_tuples_shorter}
      \tl(01)^i0, \text{ and } 1(01)^i\tl,
    \end{equation}
    (Note that reversing of the second segment did not change its polarity, due to even length.)
    Triples $(\tl,1,0)$ and $(1,\tl,0)$ have opposite orientations.
    Together with the fact that the initial segment of $w^{(2)}$ is reversed, via Lemma~\ref{lem:crossing_orient} we get that pairs \eqref{eq:opposite_tuples_shorter} force distinct intersections, giving $\min \{|s|, |t|\}$ intersections, as claimed.

    \medskip
    Case $st > 0$. Assume without loss of generality that $s \ge t > 0$. We can further assume that  $s \ge t + 2$, since otherwise the bound $|s-t| - 1$ we are aiming at is trivial. Note that $w^{(1)}$ starts with $\tl(01)^{s}0$.
    We consider two possible cases: (i) $w^{(2)} = \tl(01)^{t+1}b$, $b \notin \{\tl, 0, 1\}$ and (ii) $w^{(2)} = \tl(01)^t0b$, $b \notin \{0, \tl,1\}$.
    In the case (i) for $i = 1, \dots, s - t - 1$ we can write
\[
  w^{(1)} = \tl0(10)^{i-1}\underline{1(01)^{t+1}0}(10)^{s-t-i-1}\dots 
\]
and comparing the underlined word with the word $w^{(2)} = v(01)^{t+1}b$ of the $(t,b)$-tail, both of which have the same, even, length.
Since $ b \notin \{0,\tl,1\}$, triples $(1,\tl,0)$ and $(1,b, 0)$ have opposite orientations. So by Lemma~\ref{lem:crossing_orient} the underlined word and $w^{(2)}$ force an intersection which is different for each $i = 1, \dots, s-t-1$.

    In the case (ii), writing, for $i = 1, \dots, s- t - 1$,

\[
  w^{(1)} = \tl0(10)^{i-1}\underline{1(01)^t01}(01)^{s-t-i-1}\dots
\]
we compare the underlined word and $w^{(2)} = v(01)^t0b$, which have the same, odd, length.
Since $b \notin \{0,\tl,1\}$, triples $(1,\tl,0)$ and $(0,b, 1)$ have the same orientation, and so by Lemma~\ref{lem:crossing_orient} the underlined word and $w^{(2)}$ force an intersection and for each $i = 1, \dots, s-t-1$ the intersection is different.
  \end{proof}
  }
  The following lemma is used in the proof of the inequality \eqref{eq:f_less_g} of Proposition~\ref{prop:ineqs}.

  \begin{lemma}\label{lem:snail_pandemonium}
Fix a word $u$ that starts an and ends in a letter other than $0$ or $1$. If we have a family $F$ with more than $4(2k + 1)^2$ $v$-loops of the same polarity, each of which induces an even-length $\alpha\alpha$-free word of the form
  \begin{equation}
    \label{eq:word_form}
    \tl(01)^{s}w'uw''(10)^{t}\tl, \quad t, s \in \Z, \quad w', w'' \in \{0, 1, 01, 10\},
  \end{equation}
then there are $\ell_1, \ell_2 \in F$ (possibly $\ell_1 = \ell_2$) with at least $k$ (self)-intersections.
  \end{lemma}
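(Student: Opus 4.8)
## Proof plan for Lemma~\ref{lem:snail_pandemonium}

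The plan is to extract, from each loop of $F$, a snail near its start and a snail near its end, and then run a pigeonhole driven by Lemma~\ref{lem:x_windings}. Let $a$ and $b$ be the first and last letters of the fixed word $u$; by hypothesis $a,b\notin\{0,1\}$, and we treat the generic case $a,b\neq v$ (the cases $a=v$ or $b=v$ are handled by the same method using only the $st<0$ clause of Lemma~\ref{lem:x_windings}, and are easier). Take $\ell\in F$ with induced word $v(01)^{s}w'uw''(10)^{t}v$. The part of $\ell$ before $u$ uses only letters in $\{0,1,v\}$, none of which equals $a$, so the first crossing of gap $a$ occurs exactly at the first letter of $u$; hence the initial part of $\ell$ up to that crossing is a segment inducing $v(01)^{s}w'a$, i.e.\ an $(s,a)$-snail $L_\ell$. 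Symmetrically, reversing the final part of $\ell$ from its last crossing of gap $b$ produces a $(t,b)$-snail $R_\ell$. All the $L_\ell$ share the common polarity of $F$; a short parity count shows that the polarity of $R_\ell$ is obtained from that of $F$ by a flip governed by the parity of $|u|+|w'_\ell|+|w''_\ell|$, so discarding a bounded factor of loops (pigeonholing on the parity of $|w'_\ell|+|w''_\ell|$) we pass to a large subfamily $F_1\subseteq F$ on which all $L_\ell$ share one polarity and all $R_\ell$ share one polarity.

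Now suppose for contradiction that no member of $F$ has $k$ self-intersections and no two members have $k$ mutual intersections. For $\ell_1,\ell_2\in F_1$ the intersections of the subsegments $L_{\ell_1},L_{\ell_2}$ are intersections of $\ell_1,\ell_2$; since the $L_\ell$ share a polarity and $a\neq v$, Lemma~\ref{lem:x_windings} forces $\min(|s_{\ell_1}|,|s_{\ell_2}|)<k$ whenever $s_{\ell_1}s_{\ell_2}<0$, and $|s_{\ell_1}-s_{\ell_2}|\le k$ whenever $s_{\ell_1}s_{\ell_2}>0$. A set of integers with this property has at most $2k+1$ elements: it cannot contain both a value $\ge k$ and a value $\le-k$, so up to a global sign the positive values lie in $\{1,\dots,k-1\}$, while inside each sign class all values lie within distance $k$ of one another, leaving at most $(k-1)+(k+1)+1=2k+1$ values. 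Hence $s_\ell$ takes at most $2k+1$ values over $F_1$, and the same argument applied to the $R_\ell$ shows $t_\ell$ takes at most $2k+1$ values.

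It remains to count words. The induced word of a loop in the family is determined by $s,w',t,w''$ (as $u$ is fixed, and the decomposition into $v(01)^{s}w'uw''(10)^{t}v$ is unique since the initial and final runs of $0$s and $1$s are read off uniquely by $\alpha\alpha$-freeness); moreover, once $s$ is fixed, $\alpha\alpha$-freeness together with the fixed parity information leaves only a bounded number of choices for $w'$, and likewise for $w''$ given $t$, with $s=0$ and $t=0$ as the only slightly exceptional values. Consequently the number of distinct words realised in $F_1$ is $O(k^2)$, and — this is the point at which the constant $4(2k+1)^2$ is calibrated — strictly smaller than $|F_1|$. Therefore two loops of $F_1$ induce the same word; having also the same polarity, they are homotopic by Lemma~\ref{lem:nonhomotopicwords}, which contradicts that distinct members of $F$ induce distinct words (equivalently, that $F$ is non-homotopic, the setting in which the lemma is applied). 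This contradiction proves the lemma. The main work lies not in the idea but in the bookkeeping: tracking the polarity of the reversed snails so that Lemma~\ref{lem:x_windings} is legitimately applicable, and counting the $(s,w')$- and $(t,w'')$-possibilities (including the $s=0$ and $t=0$ corner cases) precisely enough that $4(2k+1)^2$ loops suffice.
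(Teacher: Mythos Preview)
Your approach is the paper's: extract an $(s,a)$-snail at the start and a $(t,b)$-snail at the end of each loop, invoke Lemma~\ref{lem:x_windings}, and pigeonhole on the $(s,t)$ pairs together with the at most $4$ choices of $(w',w'')$. The paper organises this directly rather than by contradiction (it pools the start- and end-snail parameters into a single set $S$ and shows $|S|>2k+1$), and like you it tacitly uses that the loops induce pairwise distinct words, so your final appeal to that assumption is no worse than the paper's own counting of pairs. One simplification: your parity pigeonhole is unnecessary, since the even-length hypothesis already makes the reversed loop have the same polarity (Remark~\ref{rem:reversing_segments}), hence all start and end snails share the common polarity of $F$ and you may take $F_1=F$.

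There is one small gap. Your parenthetical for the case ``$a=v$ or $b=v$'' (which amounts to $u$ being empty) does not work as stated: with only the $st<0$ clause of Lemma~\ref{lem:x_windings} you obtain no bound on how many same-sign values $s_\ell$ can take, so your $2k+1$ bound on the $s$-values, and hence the whole counting, collapses. The paper handles this case by a different (and genuinely easier) argument: when $u$ is empty the entire word is $v(01)^s01v$, so the loop itself is an $(s,v)$-snail and its reverse is a $(-s,v)$-snail of the same polarity; the $st<0$ clause then yields $|s|$ self-intersections, and (using distinct words) already $|F|\ge 2k$ forces some $|s|\ge k$.
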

  
  \toappendix{
  \begin{proof}[Lemma~\ref{lem:snail_pandemonium}]
    If $u$ is an empty word, then each loop induces a word that is of the form $\tl(01)^s01\tl$ and therefore is a $(s,\tl)$-snail. Since reversing an even-length segment does not change its polarity, the same loop reversed is a $(-s,\tl)$-snail of the same polarity, and thus has $|s|$ self-intersections by Lemma~\ref{lem:x_windings}. If $|F| \ge 2k$, then $F$ contains a loop that induces $\tl(01)^s01\tl$ with $|s| \ge k$ which has at least $k$ self-intersections.

    Assuming $u$ is nonempty, let a \emph{$s$-snail} mean a $(s,a)$-snail for any $a \neq \tl$. So a loop that induces a word of the form \eqref{eq:word_form}, starts with a $s$-snail and ends with a $t$-snail, to such loop we associate a pair $(s, t)$. By the assumptions on the loops and words, all these snails have the same polarity. 

    Let $S$ be the set of integers $s$, such that some $\ell \in F$ contains a $s$-snail.
    In view of Lemma~\ref{lem:x_windings}, it is enough to show that $S$ contains two numbers $s, t$ such that either (i) $st < 0$ and $\min \{s, t\} \ge k$ or (ii) $st > 0$ and $|s - t| \ge k + 1$.
    
      For given $s, t$, there are at most $4$ words of the form \eqref{eq:word_form}, hence if the family has more than $4(2k + 1)^2$ loops, there are more than $(2k + 1)^2$ different pairs $(s, t)$ that are associated to some $\ell \in F$, in particular $|S|^2 > (2k + 1)^2$. Denoting $S_+ := \{s \in S : s > 0\}$, $S_- := \{s \in S : s < 0\}$, we have that $|S_+| + |S_-| \ge 2k + 1$. If $|S_+|, |S_-| \ge k$, there are numbers $s,t$ satisfying (i). Otherwise, $\min \{|S_+|, |S_-| \} \le k - 1$ and so $\max \{|S_+|, |S_-| \} \ge k + 2$, in which case there are $s, t$ satisfying (ii).
  \end{proof}
  }

   We note in passing that Lemmas~\ref{lem:windings_orient} and~\ref{lem:x_windings} are proved by finding sufficiently many segment pairs that satisfy the conditions of Lemma~\ref{lem:crossing_orient} and arguing that each pair implies a distinct intersection. Lemma~\ref{lem:snail_pandemonium} then follows from Lemma~\ref{lem:x_windings} by relatively straightforward pigeonhole-type arguments.

\ifarxivversion
  The full proofs of Lemmas~\ref{lem:windings_orient} to \ref{lem:snail_pandemonium} are included in Appendix~\ref{apx:windings}.
\else
  The detailed proofs of Lemmas~\ref{lem:windings_orient} to \ref{lem:snail_pandemonium} are included in the full version of this paper \cite{fullversion}.
\fi

  \section{Expansions of words}
  \label{ss:expansions}

Given an $\alpha\alpha$-free word $w$ of length at least two, consider all maximal subwords that use two letters. For example, if $w = 2010212$, the maximal words are $20,010,02$, and $212$. Ordering these words by the position of the first letter, it is clear that every two consecutive words overlap in a single letter. We classify these subwords according to the pair of letters they use. For distinct $a, b$ a maximal subword that uses $a$ and $b$ is called an \emph{$ab$-word} (the ordering of $a$ and $b$ does not matter). For a given pair $a, b$, the $ab$-words are disjoint and surrounded by letters other than $a$ and $b$ (we assume that $w$ is extended by adding the letter $\tl$ at each end).

Each $ab$-word starts either with $ab$ or $ba$. If we replace each of these two-letter subwords by its power, say, $ab$ by $(ab)^{s + 1}, s \ge 0$, we obtain another $\alpha\alpha$-free word, which we call an $ab$-expansion of $w$.
Hence if $w$ has $\ell$ $ab$-words, each $ab$-expansion of $w$ is uniquely described by a vector $s = (s_1, ..., s_{\ell})$ of nonnegative integers. 

If, in addition, $w$ is $\alpha\beta\alpha\beta$-free, then for each distinct $a,b$ all maximal $ab$-words have at most three letters. If $w$ is not $\alpha\beta\alpha\beta$-free, it can be obtained by consecutive $ab$-expansions of a $\alpha\beta\alpha\beta$-free word, one for each pair $a,b$ of distinct letters.



The \emph{self-intersection number} of a loop-word $w$ is defined as the smallest number of self-intersections in a loop that induces $w$.
\begin{lemma}\label{lem:expansions}
  Let $a, b$ be two gaps adjacent to the same obstacle $v$. Let $\ell = \ell(k) \ge 2\sqrt{k}$ be a positive integer.

  Let $w$ be a $\alpha\alpha$-free word which contains no subword $abab$ or $baba$. Suppose that there are at most $\ell$ maximal $ab$-words in $w$.  If $v = v_1$, then also assume that none of these words appears at the beginning or the end of $w$. 
  The number of $ab$-expansions of $w$ with the self-intersection number smaller than $k$ is
  \begin{equation*}
   \left(\frac{\ell}{\sqrt{k}}\right)^{O\left(\sqrt{k} \right)}
      e^{ O\left(\sqrt{k} \right)}.
\end{equation*}
In particular, if also $\ell=O\big(\sqrt{k}\big)$ then the above estimate becomes $e^{ O\left(\sqrt{k} \right)}$.
\end{lemma}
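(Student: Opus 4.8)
The goal is to count $ab$-expansions of $w$ whose self-intersection number is below $k$, so the strategy is to bound, for each expansion vector $s = (s_1, \dots, s_\ell)$ of nonnegative integers, how large the $s_i$ can be before a windings argument (Lemma~\ref{lem:windings_orient}, and for the $v_1$-adjacent case this is excluded by hypothesis so we need not worry about snails here) forces too many self-intersections. The key observation is that an $ab$-word that has been expanded to $(ab)^{s_i+1}$ (or the reversed form) contains, roughly, an $s_i$-winding around $v$, and by Lemma~\ref{lem:windings_orient}\ref{en:windings_orient:self} such a winding forces at least $s_i$ self-intersections within the loop. Hence if the expansion has self-intersection number less than $k$ we must have $s_i < k$ for every $i$, and moreover the $s_i$ cannot all be large simultaneously: two expanded windings around the same obstacle $v$ of sizes $s_i$ and $s_j$ force (via Lemma~\ref{lem:windings_orient}\ref{en:windings_orient:inters}) on the order of $2\min\{s_i,s_j\}$ mutual intersections, so after reversing one we even get self-intersection contributions. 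The cleanest bookkeeping is: the total self-intersection count of the expanded word is at least $\sum_i s_i$ (each winding contributes $s_i$ internal self-intersections, and these are disjoint for distinct maximal $ab$-words because the windings lie in disjoint parts of the loop). Therefore every valid expansion vector satisfies $\sum_{i=1}^\ell s_i < k$.

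**Counting step.** Once we know every valid $s$ lies in the set $\{s \in \mathbb{Z}_{\ge 0}^\ell : \sum_i s_i < k\}$, the number of such vectors is $\binom{\ell + k - 1}{\ell}$ (weak compositions of integers up to $k-1$ into $\ell$ parts, or equivalently $\binom{k-1+\ell}{\ell}$). It remains to estimate this binomial under the hypothesis $\ell \ge 2\sqrt{k}$. Using the standard bound $\binom{M}{r} \le (eM/r)^r$ with $M = \ell + k - 1$ and $r = \ell$, and noting $k \le \ell^2/4$ so that $M = O(\ell^2/\ell \cdot \ell) $... more precisely $\ell + k - 1 \le \ell + \ell^2/4$, we get
\begin{equation*}
\binom{\ell+k-1}{\ell} \le \left(\frac{e(\ell+k-1)}{\ell}\right)^{\ell} = \left(e\Big(1 + \frac{k-1}{\ell}\Big)\right)^{\ell} = \left(O\!\left(\frac{k}{\ell}\right)\right)^{\ell},
\end{equation*}
using $\ell \ge 2\sqrt k$ so that $k/\ell \ge \sqrt k /2 \to \infty$ dominates the additive $e$. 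Now since $\ell \ge 2\sqrt k$, writing $\ell = c\sqrt k$ with $c \ge 2$, the quantity $(O(k/\ell))^\ell = (O(\sqrt k / c))^{c\sqrt k}$, and one checks that the function $c \mapsto (C/c)^{c}$ (for fixed $C$) is maximized at $c$ of constant order, so this is $(\ell/\sqrt k)^{O(\sqrt k)} e^{O(\sqrt k)}$ as claimed; when additionally $\ell = O(\sqrt k)$ the first factor is $e^{O(\sqrt k)}$ and the whole bound collapses to $e^{O(\sqrt k)}$.

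**Main obstacle.** The delicate point is the claim that the self-intersection number of the expanded word is at least $\sum_i s_i$ — i.e. that the winding self-intersections coming from distinct maximal $ab$-words genuinely add up and are not "shared". This needs the geometric fact that the maximal $ab$-words occupy disjoint subwords of $w$ separated by letters other than $a,b$, so the corresponding segments of any loop inducing the expansion are edge-disjoint arcs; since each such segment is (after possibly trimming a constant number of boundary arcs) an $s_i$-winding around $v$, Lemma~\ref{lem:windings_orient}\ref{en:windings_orient:self} applies to each independently and the self-intersections counted are internal to distinct segments, hence distinct. One must also handle the boundary cases where a maximal $ab$-word sits at the very start or end of $w$ (for a general obstacle $v \ne v_1$ this still gives a winding; for $v = v_1$ the hypothesis explicitly forbids this, which is exactly why the hypothesis is there — otherwise snails, governed by Lemma~\ref{lem:x_windings} rather than Lemma~\ref{lem:windings_orient}, would enter and the counting would be different). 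Finally, the passage from "$\sum s_i < k$" to the stated asymptotic form is a routine optimization of $(C/c)^c$ that I would state but not belabor.
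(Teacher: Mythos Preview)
Your approach has a genuine gap: the constraint $\sum_i s_i < k$ alone is \emph{too weak} to give the claimed bound. With $\ell = \Theta(\sqrt{k})$, the number of nonnegative integer vectors $(s_1,\dots,s_\ell)$ satisfying $\sum_i s_i < k$ is
\[
  \binom{\ell + k - 1}{\ell} \;=\; \left(\Theta\!\left(\frac{k}{\ell}\right)\right)^{\ell}
  \;=\; \left(\Theta\big(\sqrt{k}\,\big)\right)^{\Theta(\sqrt{k})}
  \;=\; e^{\,\Theta(\sqrt{k}\,\ln k)},
\]
which is strictly larger than $e^{O(\sqrt{k})}$. Your final paragraph asserts that $(O(\sqrt{k}/c))^{c\sqrt{k}}$ reduces to $(\ell/\sqrt{k})^{O(\sqrt{k})}e^{O(\sqrt{k})}$ via an optimization in $c$, but the base $\sqrt{k}/c$ is not a constant in $k$, so the $\ln\sqrt{k}$ factor in the exponent does not go away; the optimization argument is simply wrong.

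What is missing is precisely the pairwise term you mention and then discard. Two distinct windings of sizes $s_i$ and $s_j$ around the same obstacle are segments of the \emph{same} loop, so by Lemma~\ref{lem:windings_orient}\ref{en:windings_orient:inters} their mutual crossings are self-intersections of that loop, contributing at least $2\min\{s_i,s_j\}$. Hence the self-intersection number is at least
\[
  \sum_i s_i \;+\; 2\sum_{i<j}\min\{s_i,s_j\}.
\]
This extra term is what makes the count work: for any threshold $\alpha\ge 1$, if $m_{\ge\alpha}$ coordinates satisfy $s_i\ge\alpha$, the sum above is at least $m_{\ge\alpha}^2\,\alpha$, forcing $m_{\ge\alpha}\le\sqrt{k/\alpha}$. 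The paper then counts vectors $s$ subject to these tail constraints by first bounding the number of multiplicity vectors $m=(m_0,\dots,m_k)$ and then the multinomial $\binom{\ell}{m_0,\dots,m_k}$ for each $m$; both pieces come out as $(\ell/\sqrt{k})^{O(\sqrt{k})}e^{O(\sqrt{k})}$. Without the pairwise term there is no way to recover the stated bound.
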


\subsection{Sketch of the proof of Lemma~\ref{lem:expansions}}
    
  Let $\ell'$ be the number of maximal $ab$-words in $w$ and consider an $ab$-expansion of $w$ determined by vector $s_1, \dots, s_{\ell'}$. In this expansion the $i$th maximal $ab$-word, together with the letters surrounding it, is either an $s_i$-winding or an $(s_i + 1)$-winding around $v$. By Lemma~\ref{lem:windings_orient} such an $ab$-expansion has at least
    \begin{equation}
    \label{eq:self_int_upper}
        \sum_i {s_i} + 2 \sum_{i < i'} \min\{s_i, s_{i'}\}
    \end{equation}
    self-intersections. 
    
    Our goal is to give an upper bound, in terms of $k$ and $\ell$, on the number of vectors $s$ of length $\ell'$ such that \eqref{eq:self_int_upper} is smaller than $k$. Since $\ell' \le \ell$ and the number of such vectors is clearly largest for $\ell' = \ell$, let us further assume that $\ell' = \ell$.

%
    
   
    For $i=0,\dots,k$, let $m_i=m_i(s)$ denote the multiplicity of $i$ in $s$,
 formally
 \begin{equation*}
   m_i=m_i(s) := |\left\{ j \in [\ell] : s_j = i \right\}|.
 \end{equation*}
 Given an integer $\alpha \ge 0$, let us write $m_{\ge \alpha} := \sum_{i \ge \alpha} m_i$ and note that
 \begin{equation}
   \label{eq:m_sum}
   m_{\ge 0} = \ell.
 \end{equation}
 Moreover 
 \begin{equation}
   \label{eq:m_tail}
   m_{\ge \alpha} \le \sqrt{k/\alpha}, \qquad \alpha = 1, 2, \dots,k,
 \end{equation}
 since otherwise, noting that $m_{\ge \alpha} = \left\{ i : s_i \ge \alpha \right\}$, by \eqref{eq:self_int_upper} the self-intersection number is at least 
 \[
   m_{\ge \alpha} \alpha + 2 \binom{m_{\ge \alpha}}{2} \alpha = m_{\ge \alpha}^2 \alpha > k,
 \]
giving a contradiction.
 
 An upper bound on the number of vectors $s$ can be obtained by bounding (i) the number of vectors $s$ giving the same vector $m=(m_0,\dots,m_k)$ and (ii) the number of distinct vectors $m=(m_0,\dots,m_k)$ with nonnegative integer coordinates satisfying the constraints~\eqref{eq:m_sum} and \eqref{eq:m_tail}. The product of the two obtained bounds is then an upper bound on the number of vectors $s$.
 The two bounds are obtained by combinatorial methods in \ifarxivversion Appendix~\ref{apx:proof-expansions} \else the full version of this paper \cite{fullversion} \fi, and they are
$({\ell}/{\sqrt{k}})^{\sqrt{k}}
      e^{ O\left(\sqrt{k} \right)}$
for (i) and
$e^{O(k^{1/3}(\ln \ell + \ln k))}$
for (ii).
Their product is of order $({\ell}/{\sqrt{k}})^{O\left(\sqrt{k}\right)}
      e^{ O\left(\sqrt{k} \right)}$, which gives Lemma~\ref{lem:expansions}.

\toappendix{
\section{Estimates needed in the proof of Lemma~\ref{lem:expansions}}\label{apx:proof-expansions}

Here we complete the proof of Lemma~\ref{lem:expansions} by showing the two bounds claimed at the end of Section~\ref{ss:expansions}.

  \subsection{Number of vectors $s$ corresponding to a particular vector $m$}
Our first task is to find an upper bound on the number of choices of $s = (s_1, \dots, s_\ell)$ corresponding to a fixed vector $m$.
By \eqref{eq:m_tail}, we have $m_i = 0$ for $i > k$, so we further treat $m$ as a vector $(m_0, \dots, m_k)$. Note that $m_{\ge 0} = \ell$.
Thus, we are counting the ways to choose, for $i = 0, \dots, k$, which $m_i$ coordinates of $s$ are assigned value $i$, which is the \emph{multinomial coefficient}
    \begin{equation*}
    \binom{\ell}{m_0, \dots, m_{k}} = \frac{\ell!}{\prod_{i = 0}^k m_i!}.
    \end{equation*}

   
    \begin{proposition}\label{prop:multinomial}
    Suppose that 
    \begin{equation}
      \label{eq:ell_cond}
      \ell   \ge  2\left\lfloor\sqrt{k}\right\rfloor
      - \left\lfloor\sqrt{k/2}\right\rfloor.
    \end{equation}
    If a vector $m=(m_0,\dots,m_k)$ with nonnegative integer coordinates satisfies~\eqref{eq:m_sum} and~\eqref{eq:m_tail}, then 
         \begin{equation*}
    \binom{\ell}{m_0, \dots, m_{k}} \le \binom{\ell}{z_0, \dots, z_{k}},
    \end{equation*}
      where
         \begin{align*}
     z_0 & :=  \ell-\left\lfloor\sqrt{k}\right\rfloor,\\
     z_i & :=  \left\lfloor\sqrt{ k/i}\right\rfloor-\left\lfloor\sqrt{ k/(i+1)}\right\rfloor,\ \ \ \ \ \  \text{for $i=1,\dots,k$.}
         \end{align*}
    \end{proposition}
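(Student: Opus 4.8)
### Proof plan for Proposition~\ref{prop:multinomial}

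\textbf{Overall strategy.} The claim is that among all admissible vectors $m$ — nonnegative integer vectors satisfying the normalization $\sum_i m_i = \ell$ and the tail bounds $m_{\ge\alpha}\le\sqrt{k/\alpha}$ — the specific vector $z$ maximizes the multinomial coefficient. I would prove this by an exchange (smoothing) argument: starting from an arbitrary admissible $m$, I will show that any ``deviation'' from $z$ can be corrected by moving a unit of mass between two coordinates in a way that (a) preserves admissibility and (b) does not decrease the multinomial coefficient. Since $\binom{\ell}{m_0,\dots,m_k}=\ell!/\prod_i m_i!$, moving one unit from coordinate $i$ to coordinate $j$ multiplies the coefficient by $m_i/(m_j+1)$; hence such a move is non-decreasing precisely when $m_i \ge m_j+1$, i.e. when we move mass from a strictly larger coordinate to a no-larger one. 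The whole proof is really about checking that the constraints permit enough such moves to reach $z$.

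\textbf{Key steps.} First I would record the elementary properties of $z$: that $z_i\ge 0$ (this uses monotonicity of $i\mapsto\sqrt{k/i}$, so the floor differences are nonnegative), that $\sum_{i\ge\alpha} z_i = \lfloor\sqrt{k/\alpha}\rfloor$ for each $\alpha\ge 1$ (telescoping), so $z$ meets each tail constraint \eqref{eq:m_tail} with equality-at-the-floor, and that $\sum_i z_i=\ell$ (which is where hypothesis \eqref{eq:ell_cond} enters: it guarantees $z_0=\ell-\lfloor\sqrt k\rfloor$ is large enough to be nonnegative and indeed that $z_0\ge z_1$, so $z$ itself is admissible and ``decreasing enough''). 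Second, given an admissible $m\ne z$, I would locate the smallest index $j$ where $m$ and $z$ differ. Using the tail identities for $z$ and the tail inequalities for $m$, I would argue that $m_{\ge j}\le z_{\ge j}$ whenever $m$ agrees with $z$ below $j$ forces, by the normalization, that the discrepancy is ``pushed to the right'': concretely, if $m_j>z_j$ then some coordinate $i>j$ has $m_i<z_i$, and conversely. Third, in the case $m_j > z_j$, I move one unit from coordinate $j$ down to coordinate $0$ (or to whichever smaller-indexed coordinate is available); since $z$ is arranged so that the mass is concentrated at small indices, $m_0$ will be small relative to $m_j$ after enough agreement, so the move is non-decreasing; and it keeps every tail sum $m_{\ge\alpha}$ unchanged for $\alpha>j$ and can only decrease it for $\alpha\le j$, hence preserves \eqref{eq:m_tail}; the normalization \eqref{eq:m_sum} is preserved automatically. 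In the case $m_j<z_j$, I instead move a unit from a large-index coordinate $i$ (where $m_i>z_i$, which must exist) into coordinate $j$; I must check this does not violate the tail bound at level $j+1,\dots,i$, which it does not because we are only decreasing $m_{\ge\alpha}$ there for $j<\alpha\le i$. Iterating, the total discrepancy $\sum_i|m_i-z_i|$ strictly decreases, so after finitely many non-decreasing moves we reach $z$, giving $\binom{\ell}{m}\le\binom{\ell}{z}$.

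\textbf{Main obstacle.} The delicate point is verifying that each exchange move is simultaneously \emph{admissibility-preserving} and \emph{coefficient-non-decreasing} — these pull in opposite directions, since the ``obvious'' mass-increasing move (pile everything onto coordinate $0$) is good for the coefficient only when $m_0$ is already larger than the donor coordinate, while the tail constraints want mass away from the large indices. The resolution is to always process indices left-to-right and exploit that $z$ is, by construction, the ``most front-loaded'' admissible vector: once $m$ agrees with $z$ on $0,\dots,j-1$, the partial sum $m_0+\cdots+m_{j-1}=z_0+\cdots+z_{j-1}$ is already close to $\ell$, forcing the remaining coordinates of $m$ to be small, which is exactly what makes the downward move non-decreasing. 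I would also need to handle the boundary case where the hypothesis \eqref{eq:ell_cond} is tight, to be sure $z_0\ge z_1$ so that $z$ is genuinely the maximizer and not merely a critical point; a short direct computation with the floor estimates $\lfloor\sqrt k\rfloor-\lfloor\sqrt{k/2}\rfloor$ handles this. The remaining computations (the telescoping identities, the floor monotonicity) are routine and I would state them without belaboring the arithmetic.
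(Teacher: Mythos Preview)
Your exchange idea is right, but the specific moves you describe do not work, and the error is in the direction you want to push mass.

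First, your case split is degenerate. At the \emph{smallest} disagreement index $j$ you can never have $m_j<z_j$: agreement on $0,\dots,j-1$ together with the tail bounds gives $m_{\ge j}=\ell-\sum_{i<j}z_i=z_{\ge j}$, and then $m_{\ge j+1}\le z_{\ge j+1}$ forces $m_j\ge z_j$. So only the case $m_j>z_j$ survives. In that case your proposed move (send one unit from $j$ down to $0$) multiplies the coefficient by $m_j/(m_0+1)$, and your claim that ``$m_0$ will be small relative to $m_j$'' is false: by the argument just given $m_0=z_0=\ell-\lfloor\sqrt{k}\rfloor$, while $m_j\le m_{\ge j}\le\lfloor\sqrt{k/j}\rfloor\le\lfloor\sqrt{k}\rfloor$, so for typical $\ell$ the ratio is well below $1$ and the move \emph{decreases} the coefficient. (It also breaks the agreement at $0$, so your invariant unravels.)

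The missing idea is that mass must be moved \emph{rightward}, from a coordinate where $x_a>z_a$ to a larger index $b$ where $x_b<z_b$, and the reason this is non-decreasing is a structural fact about $z$ that you have not isolated: for all $a<b$ one has $z_a+1\ge z_b$. With that in hand, $x_a\ge z_a+1\ge z_b\ge x_b+1$ gives $x_a/(x_b+1)\ge 1$. The paper implements this by processing from the \emph{largest} disagreement index $b$ (where one sees $x_b<z_b$ directly from the tail bound) and choosing the donor $a<b$ to be the largest index with $x_a>z_a$; this choice also makes the preservation of \eqref{eq:m_tail} immediate, since for $a<\alpha\le b$ one has $x_j\le z_j$ on $j=a+1,\dots,k$ before the step. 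Your left-to-right scheme can be repaired along these lines, but not without proving and using $z_a+1\ge z_b$; this inequality for $a\ge1$ follows from monotonicity of $\sqrt{k/x}-\sqrt{k/(x+1)}$, and for $a=0$ it is exactly where hypothesis~\eqref{eq:ell_cond} is needed (it gives $z_0\ge z_1$, not merely $z_0\ge 0$).
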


 \begin{proof}
   First observe that we can restrict $m$ to non-increasing vectors, because (a) the multinomial coefficient and the sum of coordinates does not change the value if the coordinates are rearranged, and (b) given $m$, its non-increasing rearrangement $m' = (m'_0, \dots, m'_k)$ satisfies $\sum_{i \ge \alpha} m'_i \le \sum_{i \ge \alpha} m_i = m_{\ge \alpha}$ for $\alpha = 1, \dots, k$, (``the sum of \emph{smallest} $k - \alpha + 1$ coordinates is at most the sum of \emph{some} $k - \alpha + 1$ coordinates''). 

   For a non-increasing vector $m$, the proposition is proved by transforming such $m$ into the vector $z=(z_0,\dots,z_k)$ in finitely many \emph{steps} (defined below).
A variable vector $x=(x_0,\dots,x_k)$ is set to $x=m=(m_0,\dots,m_k)$ at the beginning of the transformation process, and the process terminates with $x=z=(z_0,\dots,z_k)$.
In each of the steps the value of the multinomial coefficient $\binom{\ell}{x_0, \dots, x_{k}}$ does not decrease, which gives the required inequality.


   
   

As with the notation $m_{\ge \alpha}$, for $\alpha\in\{0,\dots,k\}$, we define $x_{\ge\alpha} = x_\alpha+\dots+x_k$ and $z_{\ge\alpha} = z_\alpha + \dots + z_k$.

 
\emph{
The following system $\cal S$ of $k+1$ constraints is satisfied by the vector $x=(x_0,\dots,x_k)$ during the whole transformation process:
\begin{align}
    x_{\ge 0} & =  \ell
    & (\ &= z_{\ge 0\ } ), \text{ and} \label{e:system2} \\
    x_{\ge\alpha} & \le \left\lfloor\sqrt{ k/\alpha}\right\rfloor\ \ \ 
    & (\ &= z_{\ge\alpha}\  ),\ \ 
    \text{for $\alpha=1,\dots,k$}. \label{e:system1}
 \end{align}
 }
 By \eqref{eq:m_sum} and \eqref{eq:m_tail}, the system $\cal S$ is also satisfied by the (initial) vector $x=m$.
   

   
   A step consists of choosing indices $a < b$, decreasing the value of $x_a$ by one and increasing the value of $x_b$ by one. The indices are chosen as follows.
   Suppose that $x=(x_0,\dots,x_k)\ne z$ is a vector satisfying
   the system $\cal S$.
%
     Consider the largest index $b$ such that $x_b\ne z_b$.
     By \eqref{e:system1}, we have $x_{\ge b} \le z_{\ge b}$. Since $x_i = z_i$ for $i > b$, this implies $x_b < z_b$.
     Since $x_{\ge 0}=z_{\ge 0}\  (\ =\ell\ )$, there is also an index $i < b$ such that $x_i > z_i$. Let $a$ be the largest among such indices $i$. This choice of $a$ and $b$ ensures that 
     \begin{equation}
       \label{eq:x_more_z}
       x_b \le z_b - 1 \text{ and }x_j \le z_j \text{ for } j = a +1, \dots, k.
     \end{equation}
   
  
%
     We now verify that the system $\cal S$ stays valid after each step.
     It is immediate that the constraint \eqref{e:system2} is preserved.
     Regarding \eqref{e:system1}, the sum $x_{\ge\alpha}$ increases only if $a < \alpha \le b$. Recalling that \eqref{eq:x_more_z} was true before the step, we conclude that 
     $x_{\ge\alpha}\le z_{\ge\alpha}$ \emph{after} the step, which verifies that the inequality~\eqref{e:system1}
     is preserved. 
   
   Preparing to show that the multinomial coefficient does not decrease during each step, we claim that for $0 \le a < b \le k$ 
   \begin{equation}
     \label{eq:almost_decreasing}
     z_a + 1 \ge z_b.
   \end{equation}
   To see \eqref{eq:almost_decreasing}, consider first the case $a \ge 1$ and note that function $x \mapsto \sqrt{k/x} - \sqrt{k/(x+1)}$ is strictly decreasing on $(0, \infty)$ (this can be checked, say, by calculating the derivative), so
  \begin{align*}
     z_a +1 &= \left\lfloor\sqrt{ k/a}\right\rfloor-\left\lfloor\sqrt{ k/(a+1)}\right\rfloor +1
     \ge   \left\lfloor\sqrt{ k/a}\right\rfloor-\sqrt{ k/(a+1)} +1 \\
     &> \sqrt{ k/a} - \sqrt{ k/(a+1)} > \sqrt{ k/b} - \sqrt{ k/(b+1)} \\
     &> \left\lfloor\sqrt{ k/b}\right\rfloor - \left(\left\lfloor\sqrt{ k/(b+1)}\right\rfloor +1\right) =z_b-1, 
   \end{align*}
   which implies \eqref{eq:almost_decreasing}, since $z_a, z_b$ are integers. In the remaining case $a = 0$, by the assumption \eqref{eq:ell_cond} on $\ell$, and \eqref{eq:almost_decreasing} applied for $a = 1$,
   \begin{align*}
     z_a+1 &= z_0 +1 = \ell-\left\lfloor\sqrt{k}\right\rfloor +1
     \ge  2\left\lfloor\sqrt{k}\right\rfloor 
     - \left\lfloor\sqrt{k/2}\right\rfloor
     -\left\lfloor\sqrt{k}\right\rfloor +1
     \\
     &= z_1 + 1 \ge z_b.
   \end{align*}

   The value of $\binom{\ell}{x_0, \dots, x_{k}}$ in non-decreasing during the whole transformation process, as 
   the ratio between the value after and before a step is

         \[\frac{\binom{\ell}{x_0, \dots, x_a-1,\dots,x_b+1,\dots,x_{k}} }{\binom{\ell}{x_0, \dots, x_a,\dots,x_b,\dots,x_{k}} }
        =\frac{x_a}{x_b+1}
       \ge \frac{z_a+1}{z_b}\ge1,\]
       where the last inequality follows from \eqref{eq:almost_decreasing}. 
   

       It is clear that the step cannot be applied infinitely, so eventually we arrive at $x$ for which the step is not defined, that is, $x=z$.
This finishes the proof of the proposition.
    \end{proof}

We now bound the multinomial coefficient $\binom{\ell}{z_0, \dots, z_{k}}$ from above by 
\[
  e^{O(\sqrt{k})}(\ell/\sqrt{k})^{\sqrt{k}}.
\]
Since we can assume $k$ is larger than some absolute constant, we can also assume that $z_0 \ge 1$.
By Stirling's formula, we have
$\ell!\le e\sqrt{\ell} (\ell/e)^\ell$ and $z_i!\ge \sqrt{2\pi z_i}(z_i/e)^{z_i}$. Using the latter estimate for all $i$ with $z_i \ge 1$ and the equality $z_i!=1$ otherwise, it follows that
\begin{equation*}
    \binom{\ell}{z_0, \dots, z_{k}}
    \le \frac{e\ell^\ell\sqrt{\ell}}{\displaystyle\prod_{i \ge 0 :z_i \ge 1}z_i^{z_i}\sqrt{2\pi z_i} }
    = O\left(\frac{\ell^\ell}{z_0^{z_0}}\cdot \frac{1}{\prod_{i\ge 1 : z_i \ge 1}z_i^{z_i}} \right),
\end{equation*}
where we used $\ell \le \prod_{i : z_i \ge 1}(2z_i)$, which can be shown by induction.
Using $1+x\le e^x$ for $x = \frac{\ell - z_0}{z_0} = \frac{\lfloor\sqrt{k}\rfloor}{z_0}$, we get
\begin{equation*}
    \frac{\ell^\ell}{z_0^{z_0}} 
    = \left(\frac{\ell}{z_0}\right)^{z_0}\ell ^{\left\lfloor\sqrt{k}\right\rfloor} \le e^{\left\lfloor\sqrt{k}\right\rfloor}\ell ^{\left\lfloor\sqrt{k}\right\rfloor}.
\end{equation*}
Let $i_k \left(\sim (k/4)^{1/3}\right)$ be the largest integer $i$ such that $\sqrt{k/i}-\sqrt{k/(i+1)}>1$.
Then using $x - 1 < \lfloor x \rfloor \le x$ it is routine to check that $z_i \ge 1$ for $i \le i_k$ and $z_i \le 1$ for $i > i_k$. 
Therefore 
\begin{align*}
    \frac1{\prod_{i \ge 1 : z_i \ge 1}z_i^{z_i}} 
    &= \frac1{   \prod_{i=1}^{i_k} z_i^{ z_i } }
     = \frac1{   \prod_{i=1}^{i_k} \left(\Omega\left( \frac{\sqrt{k}}{i^{3/2}} \right)\right)^{ z_i } } \\
    &= \left(\frac{1}{\sqrt{k}}\right)^{z_1+\cdots+z_{i_k}}
      \prod_{i=1}^{i_k} \left( O\left( i^{3/2} \right)\right)^{ O\left( \frac{\sqrt{k}}{i^{3/2}} \right)} \\
     &= \left(\frac{1}{\sqrt{k}}\right)^{\left\lfloor\sqrt{k}\right\rfloor-\left\lfloor\sqrt{k/(i_k+1)}\right\rfloor}
       e^{ O\left( \sum_{i=1}^{i_k}\frac{\sqrt{k}}{i^{3/2}} \ln i\right)} \\
      &= \left(\frac{1}{\sqrt{k}}\right)^{\left\lfloor\sqrt{k}\right\rfloor-\Theta(k^{1/3})}
      e^{ O\left(\sqrt{k} \right)}
      = \left(\frac{1}{\sqrt{k}}\right)^{\left\lfloor\sqrt{k}\right\rfloor}
      e^{ O\left(\sqrt{k} \right)}.
\end{align*}
      Putting the previous estimates together, we get
\begin{equation*}
   \binom{\ell}{z_0, \dots, z_{k}}
    \le  \left(\frac{\ell}{\sqrt{k}}\right)^{\sqrt{k}}
      e^{ O\left(\sqrt{k} \right)}.
\end{equation*}

   \subsection{Number of vectors $m$}
   Here we give an upper bound on the number of vectors $m = (m_0, \dots, m_k)$ with nonnegative integer coordinates satisfying~\eqref{eq:m_sum} and~\eqref{eq:m_tail}.
 We set $\beta:=\left\lceil k^{1/3} \right\rceil$. 
 By~\eqref{eq:m_tail}, $m_{\ge \beta} = m_{\beta} + \dots + m_{k} \le \sqrt{k/\beta} \le \beta$, so the number of ways to choose $m_\beta, \dots, m_k$ is at most the number of ways to put $m_{\ge \beta} \le \beta$ balls into $k - \beta + 1 \le k$ bins, which is at most $k^{\beta} = e^{O(k^{1/3} \ln k)}$.
 Since by \eqref{eq:m_sum} $m_i \in [0, \ell]$ for every $i$, the number of ways to choose $m_0,\dots,m_{\beta-1}$ is at most $(\ell + 1)^{\beta} = e^{O(k^{1/3}\ln \ell)}$. We conclude that the number of ways to choose $m$ is $e^{O(k^{1/3}(\ln \ell + \ln k))}$.
}
\section{Proof of Theorem~\ref{thm:n_two_selfint}}
\label{sec:proof_n_two_selfint}
\todo{SOLVED. Rev1 Proof of Theorem 1: it might be surprising that you can prove an upper bound on f(n,k) in such a way that you consider only self-intersections, and not pairwise intersections. Some explanation would be very useful. \emph{Matas: it is not completely true that we use only self-intersections to bound $f(n,k)$ (we do that to bound $g(n,k)$); when we apply Lemma~\ref{lem:snail_pandemonium} in the proof of Proposition~\ref{prop:ineqs}, we use that no pair of loops has $k$ intersections.} \resp{We added a sentence before Theorem~\ref{thm:n_two_selfint}} }
  Recall that without loss of generality assume $v = v_1$ so that the gaps adjacent to $v$ are $0$ and $1$. By Lemma~\ref{lem:reduced_words} we can assume that every $v$-loop in the collection induces an $\alpha\alpha$-free word so that the first and last letter is $2$.
  By Lemma~\ref{lem:nonhomotopicwords} \ref{en:nonhomotopicwords:noprefix}, the $v$-loops induce different words,  so it is enough to show that the number of words of the such form with self-intersection numbers less than $k$ is $e^{O(\sqrt{k})}$.

  Recall that the inner words use letters in $\{0, 1, 2\}$. Whenever we talk about two distinct letters $a,b$, let $c$ refer to the remaining third letter. 
  The maximal $ab$-words are disjoint and each of them is surrounded by $c$ or $v$. We can replace $w$ by an $\alpha\beta\alpha\beta$-free word $w'$  by repeatedly applying operation which replaces a subword of a form $abab$, for some distinct letters $a$ and $b$, by a subword $ab$. Note that this does not change the structure of maximal words and when the procedure terminates, every maximal word is a $\alpha\beta$-word or a $\alpha\beta\alpha$-word. Moreover, $w'$ remains $\alpha\alpha$-free and the first and last letters remain $2$ in each intermediate word.


  As discussed in subsection~\ref{ss:expansions}, word $w$ can be reconstructed from a $\alpha\beta\alpha\beta$-free word $w'$ by three consecutive $ab$-expansions, one for each $ab \in \left\{ 01, 02, 12 \right\}$. Note that we can also assume that, as in $w$, the first and the last letters in both $w'$ and the intermediate expansions are $2$. 

  In view of this claim, it is enough to count $\alpha\beta\alpha\beta$-free words with fewer than $k$ self-intersections, calculate the bound $\ell$ on the number of maximal $ab$-words in such a word, and apply Lemma~\ref{lem:expansions} three times.
  
  We claim that an $\alpha\alpha$-free word with fewer than $k$ self-intersections has at most $4\sqrt{k}$ maximal $ab$-words for each pair $ab$. Assume the contrary. Let $v$ be the obstacle incident to gaps $a$ and $b$. If at least $\sqrt{k}$ of the maximal $ab$-words are $s$-windings around $v$ with $s \ge 1$, then there are at least $k$ self-intersections by \eqref{eq:self_int_upper}. So further we assume there are at least $3\sqrt{k}$ maximal $ab$-words of the form $cabc$ or $cbac$. Note that each such word corresponds either to a $cab$-upsegment or a $cab$-downsegment. If among them there are at least $\sqrt{k}$ of each polarity, then we again have $k$ self-intersections between $cab$-upsegments and $cab$-downsegments by Lemma~\ref{lem:crossing_orient}. So further assume there are more than $2\sqrt{k}$ words $cabc$ of the same polarity (note that reversing $cabc$ does not change the polarity). In particular there is a pair of such maximal $ab$-words which has no other maximal $ab$ word between them. Since their polarity is the same, between them there is an even-length word which starts and ends in a letter other than $c$. We claim that such subword contains $ab$, giving a contradiction. Say the word has $2m$ letters and argue by induction: if $m = 1$, the first and last letter is not $c$, so the word is $ab$ or $ba$. Otherwise either the first two letters are $ab$ or $ba$ (in which case we're done) or one of $ac$ and $bc$, in which case removing them we are left with a shorter word of even number of letters starting and ending not in $c$, which by induction hypothesis contains $ab$.
 
  We have shown that every word $w$ with fewer than $k$ self-intersections is a `triple' extension of a $\alpha\beta\alpha\beta$-free word $w'$ with at most $\ell := 4 \sqrt{k}$ maximal $ab$-words for each $ab \in \{01,02,12\}$. Since the first maximal word in $w'$ has at most three letters, and each subsequent maximal word has at most two additional letters, $w'$ has at most $3 + 2(3\ell - 1) = 6\ell + 1$ letters.
  Taking into account that $w'$ is $\alpha\alpha$-free and necessarily starts with $2$, there are at most 
  \begin{equation}
    \label{eq:count_abab_free}
    \sum_{i = 2}^{6\ell + 1} 2^{i-1} \le 2^{6\ell} = e^{O(\sqrt{k})}
  \end{equation}
  choices of $w'$. 
  Now Lemma~\ref{lem:expansions} implies that the number of $ab$-expansions is 
 $e^{ O(\sqrt{k})},$
  which, together with \eqref{eq:count_abab_free} implies that the number of different words induced by the $v$-loops is $e^{ O(\sqrt{k})}$. By the remark at the beginning of the proof, this proves the theorem. \qed 

\section{Proof of Proposition~\ref{prop:ineqs}}
  \label{sec:prop:ineqs}
\ifarxivversion
  For the full proof, see Appendix~\ref{apx:prop_ineqs}.
\else
  For the detailed proof, see the full version of this paper \cite{fullversion}.
\fi

  \paragraph{Proof sketch.} 
  
  Let $f = f(n,k)$ and choose a family of non-homotopic $x$-loops $\ell_1, \dots, \ell_{f}$ in $S$ of the maximal size.
  We choose 
  an obstacle $v$, 
  a point $x'$ on some loop, 
  and a path $P$ from $v$ to $x'$ that does not intersect any loop. We assume without loss of generality that $x'$ lies on the loop $\ell_f$.

  We first turn the $x$-loops into non-homotopic $x'$-loops while keeping the number of pairwise intersections and self-intersections bounded.
  We choose a path $R$ with no self-intersections that connects $x$ to $x'$ and is contained in the graph of $\ell_f$.
  The $x$-loop $\ell_f$ is already an  $x'$-loop and we turn every other $x$-loop $\ell_i$ into an $x'$-loop $\ell'_i$ by (i) following $R$ from $x'$ to $x$, (ii) going along $\ell_i$, and (iii) returning back to $x'$ via $R$. Note that to avoid an infinite number of (self-)intersections, the loops cannot follow $R$ precisely. Rather, they use pairwise disjoint paths that run along $R$ in sufficiently small distance so that if any loop intersects them it must also intersect $R$ (and thus $\ell_f$). See the left and middle parts of Figure~\ref{fig:proposition_ineqs}.
  
  \begin{figure}[h]
    \centering
    \includegraphics[width=1.0\textwidth]{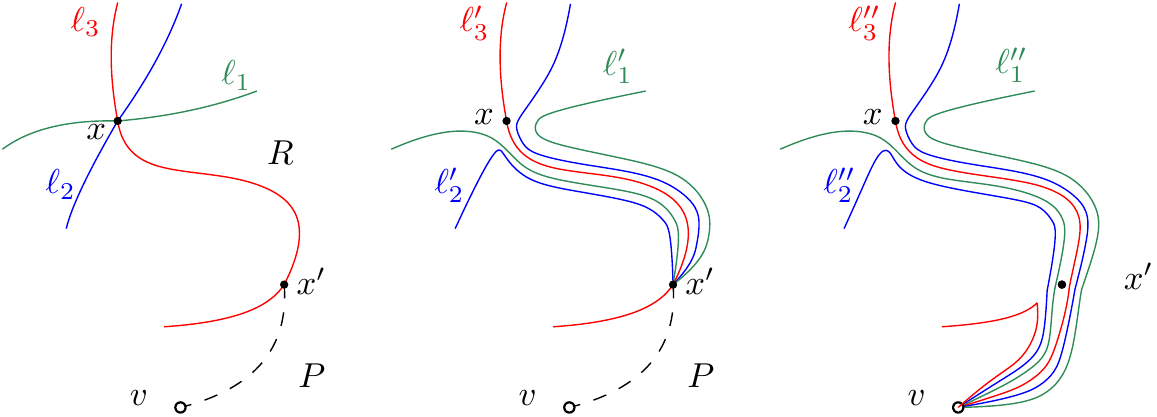}
    \caption{Transforming $x$-loops into $x'$-loops and then to $v$-loops.}%
    \label{fig:proposition_ineqs}
  \end{figure}
   
  It is easy to see that the obtained $x'$-loops are pairwise non-homotopic. And since $R$ is a subset of the loop $\ell_f$, every newly created crossing between a pair of loops $\ell'_i$ and $\ell'_j$ corresponds to some crossing between $\ell_i$ and $\ell_f$, or $\ell_j$ and $\ell_f$. This fact allows us to bound the total number of additional (self-)intersections by $4k$.
  
  \begin{figure}[h]
    \centering
    \includegraphics[width=.30\textwidth]{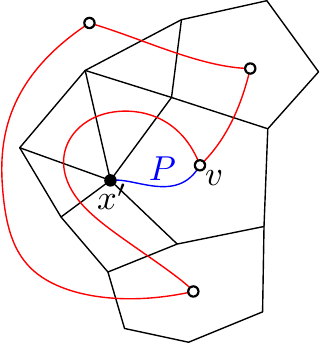}
    \caption{Drawing equator so that it does not separate $x'$ from the obstacle $v$.}%
    \label{fig:drawing_equator}
  \end{figure}

  Now we choose the equator so that it does not cross the path $P$ (see Figure~\ref{fig:drawing_equator}). 
  Applying Lemma~\ref{lem:reduced_words}.\ref{enum:x_not_obst} we modify the $x'$-loops without increasing the numbers of intersections so that they induce $\alpha\alpha$-free words.
  
  Finally we turn each $x'$-loop $\ell'_i$ into a $v$-loop $\ell''_i$ so that no additional intersections are created, and the inner word that $\ell''_i$ induces is the same as the word induced by $\ell'_i$. This is done similarly as before -- the loop $\ell''_i$ is obtained by concatenating (i) a path closely following $P$ from $v$ to $x'$, (ii) the $x'$-loop $\ell'_i$, and (iii) a path closely following $P$ back to $v$. See the right part of Figure~\ref{fig:proposition_ineqs}.

  Some resulting $v$-loops may be homotopic. Partition the $v$-loops into maximal sets of homotopic $v$-loops $H_1, \dots, H_m$ and note that $m \le g(n, 5k)$. Since the first and the last arc of every $v$-loop lies in the same hemisphere, by Lemma~\ref{lem:nonhomotopicwords} \ref{en:nonhomotopicwords:samehemi} for each set $H_j$ there is a word $u_j$ starting in letters other than $0$ or $1$ so that each $\ell \in H_j$ induces an even-length word of the form 
  \[
  v w'_\ell u_j w''_\ell  v, 
  \]
  where words $w'_\ell$ and $w''_\ell$ use letters $0$ and $1$.
  Applying Lemma~\ref{lem:snail_pandemonium}, we see that $|H_j| \le 4(2 \cdot 5k + 1)^2 \le 4( 11k)^2 = 484 k^2$ for every $j$. And since $m \le g(n,5k)$, this implies that $f(n,k) \le 484k^2 g(n,5k)$.
 
 \toappendix{
 \section{Full proof of Proposition~\ref{prop:ineqs}}
 \label{apx:prop_ineqs}
  \begin{proof}[Proposition~\ref{prop:ineqs}]
   Let $f = f(n,k)$ and choose a family $L$ of non-homotopic $x$-loops $\ell_1, \dots, \ell_{f}$ in $S$, attaining the maximum in the definition of $f(n,k)$. Removing the image of all loops partitions $\R^2$ into connected open sets, among which we pick $F$ that contains some obstacle $v$ on its boundary. Moreover, we choose a point $x'$ in the boundary $\partial F \setminus V_n$, and a path $P$ connecting $v$ to $x'$ inside $F$. We assume without loss of generality that $x'$ lies on the loop $\ell_f$.
    
   Let $R$ be a path without self-intersections that is contained in the graph of the the loop $\ell_f$ and connects $x$ to $x'$. (We obtain $R$ simply by taking one of the two segments of $\ell_f$ between $x$ and $x'$ and deleting every loop.) Moreover, let us fix a circle $C$ centered at $x$ small enough such that every loop $\ell_i$ intersects $C$ exactly twice (at the very beginning and at the very end of the loop). Denote these points as $a_{2i - 1}$ and $a_{2i}$. Since the path $R$ intersects $C$ at some point $y$ that is different from all $a_i$s, we can choose points $y_1, \dots, y_{2f}$ on $C$ close to $b$ and connect them to $x'$ using disjoint paths $R_1, \dots, R_{2f}$ that closely follow the original path $R$. Moreover, we can choose the order of the points $y_i$ so that that the straight segments $a_iy_i$ are all disjoint.
  
  \begin{figure}[h]
    \centering
    \includegraphics[width=1.0\textwidth]{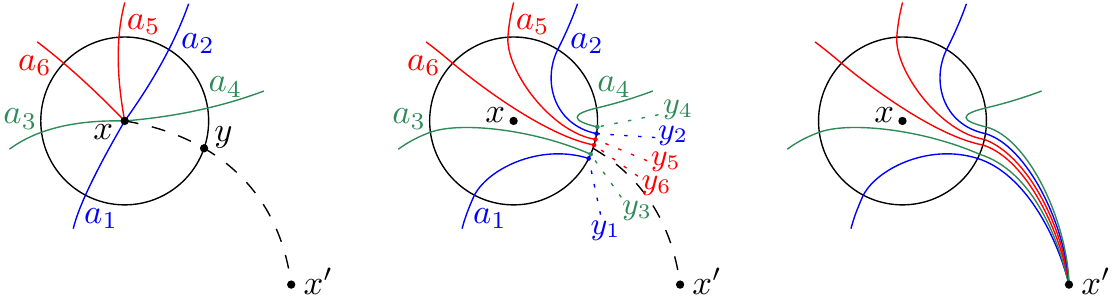}
    \caption{Transforming $x$-loops into $x'$-loops.}%
    \label{fig:x_to_x_prime}
  \end{figure}
  
  We define a family $L'$ of $x'$-loops $\ell'_1, \dots, \ell'_{f}$ where $\ell'_i$ for $i \in [f-1]$ is obtained by concatenating (i) the path $R_{2i-1}$ from $x'$ to $y_{2i-1}$, (ii) the segment $y_{2i-1}a_{2i-1}$, (iii) the part of $\ell_i$ between $a_{2i-1}$ and $a_{2i}$, (iv) the segment $a_{2i}y_{2i}$, and (v) the path $R_{2i}$ back to $x'$. The part from $x'$ to $a_{2i-1}$ is called the \emph{head} of $\ell'_i$, the part from $a_{2i}$ back to $x'$ is called the \emph{tail} of $\ell'_i$ and the remaining middle part is called the \emph{body} of $\ell'_i$. Finally, we set $\ell'_f = \ell_f$ as $\ell_f$ already passes through $x'$.  See Figure~\ref{fig:x_to_x_prime}.
  
  First, we show that the $x'$-loops in $L'$ are pairwise non-homotopic. Suppose for a contradiction that $x'$-loops $\ell'_i$ and $\ell'_j$ are homotopic. For any $h \in [f]$, let $\ell^R_h$ be the $x$-loop obtained by following $R$ from $x$ to $x'$ then going along the $x'$-loop $\ell'_h$ and finally returning to $x$ via $R$. Clearly, we have $\ell^R_h \sim \ell_h$ for every $h$. Furthermore, the loops $\ell^R_i$ and $\ell^R_j$ can be shown to be homotopic by applying the homotopy between $\ell'_i$ and $\ell'_j$ to the middle parts of $\ell^R_i$ and $\ell^R_j$. Therefore,
  $\ell_i \sim \ell^R_i \sim \ell^R_j \sim \ell_j$ which contradicts that $L$ is a family of non-homotopic loops.
  
  Now we show that every loop in $L'$ has fewer than $5k$ self-intersections and any two loops have fewer than $5k$ intersections. To that end, let $\ell'_i$ and $\ell'_j$ be two loops from $L'$, not necessarily different. Firstly, we have at most $k$ intersections between $\ell_i$ and $\ell_j$ (and thus between the bodies of $\ell'_i$ and $\ell'_j$) so it remains to bound only the intersections incident with heads and tails of $\ell'_i$ and $\ell'_j$. The paths $R_{2i-1}$, $R_{2i}$, $R_{2j-1}$ and $R_{2j}$ are pairwise disjoint and without self-intersections. Therefore, the only newly added intersections are between the head or the tail of $\ell'_i$ and the body of $\ell'_j$, and vice versa. Since $R$ is a subset of the loop $\ell_f$, there are at most $k$ intersections between $R$ and any $x$-loop $\ell_h \in L$ and thus at most $k$ newly added intersections between the head of $\ell'_i$ and the body of $\ell'_j$. Counting the symmetric situations, that gives at most $4k$ additional intersections. However, we need to be more careful when $\ell'_j$ is taken to be $\ell'_f$. In such case, any self-intersection of $\ell_f$ where we shortened a loop during the construction of $R$ can create two new intersections with the head or tail of $\ell'_i$. But since $\ell'_f$ itself has no head or tail, there are only at most $2k$ new intersections incident with the head of $\ell'_i$ and $2k$ new intersections incident with the tail of $\ell'_i$.
  

   Now we choose the equator so that it does not cross the path $P$
   (see Figure~\ref{fig:drawing_equator}). 
   Applying Lemma~\ref{lem:reduced_words}.\ref{enum:x_not_obst} we modify the $x'$-loops without increasing the numbers of intersections so that they induce $\alpha\alpha$-free words. 

   \begin{figure}[h]
     \centering
     \includegraphics[width=1.0\textwidth]{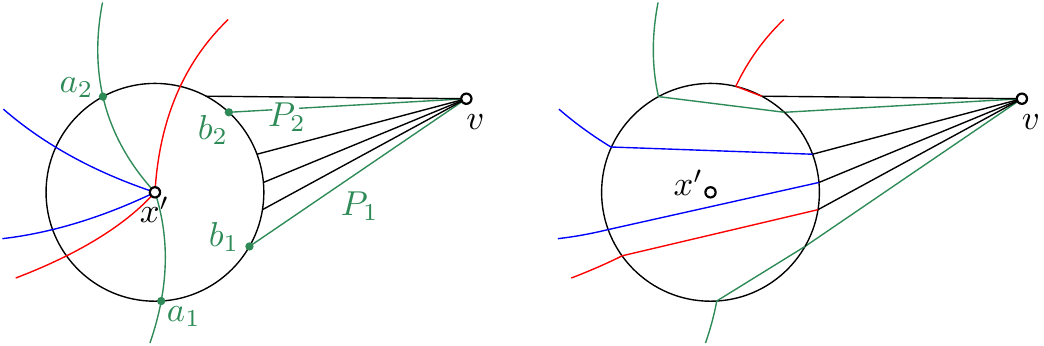}
     \caption{Transforming an $x'$-loop into a $v$-loop.}%
     \label{fig:move_x_to_v}
   \end{figure}
   Finally we turn each $x'$-loop $\ell'_i$ into a $v$-loop $\ell''_i$ so that no additional intersections are created, and the inner word that $\ell''_i$ induces is the same as the word induced by $\ell'_i$.
   If we fix a circle $C'$ centered at $x'$ that is small enough, then every $x'$-loop $\ell'_i$ intersects it exactly twice (here we use the assumption that $x'$-loops do not pass through $x'$). Denote these points as $a_{2i - 1}$ and $a_{2i}$.
   The equator partitions the face $F$ into several connected components. Let $F'$ be the one which contains the path $P$.
   Since the path $P$ intersects $C'$ at some point $b$ that is different from all $a_i$s, we can choose points $b_1, b_{2f}$ on $C'$ close to $b$ and connect them to $v$ using disjoint paths $P_1, \dots, P_{2f}$ inside $F'$ (and thus not intersecting the equator or any loop).
   Moreover, we can choose the order of the points $b_i$ so that that the straight segments $a_ib_i$ are all disjoint.
   Now for each $\ell'_i$ construct $\ell''_i$ by concatenating (i) the path $P_{2i - 1}$, (ii) the straight segment $b_{2i - 1}a_{2i-1}$, (iii) the part of $x'$-loop $\ell'_i$ outside of the circle, (iv) straight segment $a_{2i}b_{2i}$, and (v) the path $P_{2i}$, see Figure~\ref{fig:move_x_to_v}.

  Some resulting $v$-loops may be homotopic.
  Partition the $v$-loops $L'' = H_1 \cup \dots \cup H_m$ into nonempty sets accoding to the homotopy class and note that $m \le g(n, 5k)$. Since the first and the last arc of every $v$-loop lies in the same hemisphere, by Lemma~\ref{lem:nonhomotopicwords} \ref{en:nonhomotopicwords:samehemi} for each set $H_j$ there is a word $u_j$ starting in letters other than $0$ or $1$ so that each $\ell \in H_j$ induces an even-length word of the form 
\[
  v w'_\ell u_j w''_\ell  v, 
\]
where words $w'_\ell$ and $w''_\ell$ use letters $0$ and $1$.
By Lemma~\ref{lem:snail_pandemonium}, $|H_j| \le 4(2 \cdot 5k + 1)^2 \le 4( 11k)^2 < 484 k^2$ for every~$j$.

   
Since $m \le g(n,5k)$, this implies that $f(n,k) \le 484k^2 g(n,5k)$.
  This completes the proof of Proposition~\ref{prop:ineqs}.
  \end{proof}
}




\bibliographystyle{splncs04}
\bibliography{bibliography}

\ifarxivversion 
  \newpage 
  \appendix

  \section{Appendix} 
  \appendixText 
\fi

\end{document}